\pgfplotsset{
    legend image with text/.style={
        legend image code/.code={%
            \node[anchor=center] at (0.3cm,0cm) {#1};
        }
    },
}
\newcommand{\be}{\begin{equation}}
\newcommand{\ee}{\end{equation}}
\newcommand{\bear}{\begin{eqnarray}}
\newcommand{\eear}{\end{eqnarray}}
\newcommand{\bears}{\begin{eqnarray*}}
\newcommand{\eears}{\end{eqnarray*}}
\newcommand{\bi}{\begin{itemize}}
\newcommand{\ei}{\end{itemize}}
\newcommand{\ben}{\begin{enumerate}}
\newcommand{\een}{\end{enumerate}}
\newcommand{\E}{\mathbb{E}}
\newcommand{\ie}{{i.e., }}
\newcommand{\eg}{{e.g., }}
\definecolor{ogreen}{rgb}{0,0.5,0}
\definecolor{magenta}{rgb}{1.0, 0.11, 0.81}
\definecolor{mulberry}{rgb}{0.77, 0.29, 0.55}
\definecolor{xgray}{rgb}{0.9, 0.9, 0.9}
\def \blue{\color{blue}}
\definecolor{byzantium}{rgb}{0, 0, 0.5}
\def \bes{\begin{equation*}}
\def \ees{\end{equation*}}
\def \bas{\begin{align*}}
\def \eas{\end{align*}}
\def \be{\begin{equation}}
\def \ee{\end{equation}}
\def \bbm{\begin{bmatrix}}
\def \ebm{\end{bmatrix}}
\def \mbx {\mathbf{x}}
\def \mbr {R}
\def \mba {A}
\definecolor{rawad}{rgb}{0.19, 0.55, 0.91}
\def \bes{\begin{equation*}}
\def \ees{\end{equation*}}
\def \bas{\begin{align*}}
\def \eas{\end{align*}}
\def \bbm{\begin{bmatrix}}
\def \ebm{\end{bmatrix}}
\def \m{{\fontfamily{concrete}\selectfont \text{M}}}
\newtheorem*{theorem*}{Theorem}
\newtheorem{theorem}{Theorem}
\newtheorem{lemma}[theorem]{Lemma}
\newtheorem{example}{Example}
\newtheorem{remark}{Remark}
\begin{document}

\newlength\figureheight
\newlength\figurewidth

\title{Private and Rateless Adaptive Coded Matrix-Vector Multiplication}

\author{
\IEEEauthorblockN{
        Rawad~Bitar, 
        Yuxuan~Xing, 
        Yasaman~Keshtkarjahromi, 
        Venkat Dasari, 
        Salim El Rouayheb, 
        and~Hulya~Seferoglu
        }


\thanks{
The preliminary results of this paper are presented in part at the SPIE Defense + Commercial Sensing, Baltimore, MD, April 2019 \cite{BXKDRS19}. }

\thanks{R. Bitar and S. El Rouayheb are with the Department of Electrical and Computer Engineering at Rutgers, the State University of New Jersey, Piscataway, New Jersey, 08854. E-mails: {\tt rawad.bitar@rutgers.edu}, {\tt salim.elrouayheb@rutgers.edu}.}
\thanks{Y. Xing, and H. Seferoglu are with the Department of Electrical and Computer Engineering, University of Illinois at Chicago, Chicago, IL, 60607. E-mails: {\tt hulya@uic.edu}, {\tt yxing7@uic.edu}.}
\thanks{Y. Keshtkarjahromi is with the Storage Research Group at the Seagate Technology, Shakopee, MN, 55379. E-mail: {\tt yasaman.keshtkarjahromi@seagate.com}.} 
\thanks{V. Dasari is with the US Army Research Lab, Aberdeen Proving Ground, MD.}

}

%
%

\maketitle

\begin{abstract}

Edge computing is emerging as a new  paradigm to allow  processing data near the edge of the network, where the data is typically generated and collected. This enables critical computations at the edge in applications such as Internet of Things (IoT), in which an  increasing number of devices (sensors, cameras, health monitoring devices, etc.)  collect  data that needs to be processed through computationally intensive algorithms with stringent reliability, security and latency constraints. 

Our key tool is the theory of coded computation, which advocates mixing data in computationally intensive tasks by employing erasure codes and offloading these tasks to other devices for computation. Coded computation is recently gaining interest, thanks to its higher reliability, smaller delay, and lower communication costs. In this paper, we develop a private and rateless adaptive coded computation (PRAC) algorithm for distributed matrix-vector multiplication by taking into account (i) the privacy requirements of IoT applications and devices, and (ii) the heterogeneous and time-varying resources of edge devices. We show that PRAC outperforms known secure coded computing methods when resources are heterogeneous. We provide theoretical guarantees on the performance of PRAC and its comparison to baselines. Moreover, we confirm our theoretical results through simulations and implementations on Android-based smartphones.

\end{abstract}

\section{Introduction}

Edge computing is emerging as a new  paradigm to allow  processing data near the edge of the network, where the data is typically generated and collected. This enables computation at the edge in applications such as Internet of Things (IoT), in which an  increasing number of devices (sensors, cameras, health monitoring devices, etc.)  collect  data that needs to be processed through computationally intensive algorithms with stringent reliability, security and latency constraints. 

One of the promising solutions to handle computationally intensive tasks is computation offloading, which advocates offloading tasks to remote servers or cloud. Yet, offloading tasks to remote servers or cloud could be luxury that cannot be afforded by most of the edge applications, where connectivity to remote servers can be lost or compromised, which makes edge computing crucial.

Edge computing advocates that computationally intensive tasks in a device (master) could be offloaded to other edge or end devices (workers) in close proximity. 
%
However, offloading tasks to other devices leaves the IoT and the applications it is supporting at the complete mercy of an attacker. Furthermore, exploiting the potential of edge computing is challenging mainly due to the heterogeneous and time-varying nature of the devices at the edge. Thus, our goal is to develop a private, dynamic, adaptive, and heterogeneity-aware cooperative computation framework that provides both privacy and computation efficiency guarantees. Note that the application of this work can be extended to cloud computing at remote data-centers. However, we focus on edge computing as heterogeneity and time-varying resources are more prevalent at the edge as compared to data-centers.


Our key tool is the theory of coded computation, which advocates mixing data in computationally intensive tasks by employing erasure codes and offloading these tasks to other devices for computation \cite{KS18,BPR17,li2016unified,dutta2017coded,yang2017computing,halbawi2017improving,yu2017polynomial,DCG16,tandon2017gradient,li2016fundamental,lee2018speeding,karakus2017straggler,aktas2017effective}. The following canonical example demonstrates the effectiveness of coded computation. 

\begin{example} \label{ex:ex1}
Consider the setup where a master device  wishes to offload a task to 3 workers. The master has a large data  matrix $A$  and wants to compute matrix vector product $A\mathbf{x}$. 
The master device divides the matrix $A$ row-wise equally into two smaller matrices $A_1$ and $A_2$, which are then encoded using a $(3,2)$ Maximum Distance Separable (MDS) code\footnote{An $(n,k)$ MDS code divides the master's data into $k$ chunks and encodes it into $n$ chunks ($n>k$) such that any $k$ chunks out of $n$ are sufficient to recover the original data. } to give  $B_1=A_1$, $B_2=A_2$ and $B_3=A_1+A_2$, and sends each to a different worker. Also, the master device sends $\mathbf{x}$ to workers and ask them to compute $B_i\mathbf{x}$, $i \in \{1,2,3\}$. When the master receives the computed values (\ie $B_i\mathbf{x}$) from at least two out of three workers, it can decode its desired task, which is the computation of $A\mathbf{x}$. The power of coded computations is that it makes $B_3=A_1+A_2$ act as a ``joker" redundant task that can replace any of the other two tasks if they end up straggling or failing.
\hfill $\Box$
\end{example}

The above example demonstrates the benefit of coding for edge computing. However, the very nature of task offloading from a master to worker devices makes the computation framework vulnerable to attacks. One of the attacks, which is also the focus of this work, is {\em eavesdropper adversary}, where one or more of workers can behave as an eavesdropper and can spy on the coded data sent to these devices for computations.\footnote{Note that this work focuses specifically on {\em eavesdropper adversary} although there are other types of attacks; for example {\em Byzantine adversary}, which is out of scope of this work.} 
For example, $B_3=A_1+A_2$ in Example~\ref{ex:ex1} can be processed and spied by worker 3. Even though $A_1+A_2$ is coded, the attacker can infer some information from this coded task. Privacy against eavesdropper attacks is extremely important in edge computing \cite{shirazi2017extended,abbas2017mobile,roman2018mobile}. Thus, it is crucial to develop a private coded computation mechanism against eavesdropper adversary who can gain access to offloaded tasks.

In this paper, we develop a private and rateless adaptive coded computation (PRAC) mechanism. PRAC is (i) private as it is secure against eavesdropper adversary, (ii) rateless, because it uses Fountain codes \cite{LT, Raptor, Fountain} instead of Maximum Distance Separable (MDS) codes \cite{lin1983error,macwilliams1977theory}, and (iii) adaptive as the master device offloads tasks to workers by taking into account their heterogeneous and time-varying resources. Next, we illustrate the main idea of PRAC through an illustrative example.

\begin{table}[t!]
\caption{Example PRAC operation in heterogeneous and time-varying setup.}
\centering
\normalsize{
\begin{tabular}{c|c|c|c}
\underline{Time} & \underline{Worker 1}& \underline{Worker 2} & \underline{Worker 3} \\
$1$ & $R_1$ & $ {\blue \bf A_1+A_3} + R_1$ & $ {\blue \bf A_3} + R_1$  \\
$2$ & \cellcolor{gray!25} & $R_2$ &\cellcolor{gray!25} \\
$3$ & ${\blue \bf A_2 + A_3 } + R_2$ &\cellcolor{gray!25} &\cellcolor{gray!25} \\
$4$ &\cellcolor{gray!25} & \cellcolor{gray!25}&$ {\blue \bf A_2} + R_2$
\end{tabular}
}
\vspace{-0.2cm}
\label{tab:ex1}
\end{table}

\begin{example} \label{ex:ex2}
We consider the same setup in Example~\ref{ex:ex1}, where a master device offloads a task to 3 workers. The master has a large data  matrix $A$ and wants to compute matrix vector product $A\mathbf{x}$. The master device divides matrix $A$ row-wise into $3$ sub-matrices $A_1$, $A_2$, $A_3$; and encodes these matrices using a Fountain code\footnote{Fountain codes are desirable here for two properties: (i) they provide a fluid abstraction of the coded packets so the master can always decode with high probability as long as it collects enough packets; (ii) They have low decoding complexity.} \cite{LT, Raptor, Fountain}. An example set of coded packets is $A_2$, $A_3$, $A_1+A_3$, and $A_2+A_3$. However, prior to sending a coded packet to a worker, the master generates a random key matrix $R$ with the same dimensions as $A_i$ and with entries drawn uniformly from the same alphabet as the entries of $A$. The key matrix is added to the coded packets to provide privacy as shown in Table~\ref{tab:ex1}. In particular, a key matrix $R_1$ is created at the start of time slot $1$, combined with $A_1+A_3$ and $A_3$, and transmitted to workers~$2$ and $3$, respectively. $R_1$ is also transmitted to worker~$1$ in order to obtain $R_1\mathbf{x}$ that will help the master in the decoding process. The computation of $(A_1+A_3+R_1)\mathbf{x}$ is completed at the end of time slot $1$. Thus, at that time slot the master generates a new matrix, $R_2$, and sends it to worker~$2$. At the end of time slot $2$, worker~$1$ finishes its computation, therefore the master adds $R_2$ to $A_2+A_3$ and sends it to worker~$1$. A similar process is repeated at the end of time slot $3$. Now the master waits for worker~$2$ to return $R_2\mathbf{x}$ and for any other worker to return its uncompleted task in order to decode $A\mathbf{x}$. Thanks to using key matrices $R_1$ and $R_2$, and assuming that workers do not collude, privacy is guaranteed. On a high level, privacy is guaranteed because the observation of the workers is statistically independent from the data $A$.
%
\hfill $\Box$
\end{example}

This example shows that PRAC can take advantage of coding for computation, and provide privacy.

{\em Contributions.} We design PRAC for heterogeneous and time-varying private coded computing with colluding workers. In particular, PRAC codes sub-tasks using Fountain codes, and determines how many coded packets and keys each worker should compute dynamically over time. We provide theoretical analysis of PRAC and show that it (i) guarantees privacy conditions, (ii) uses minimum number of keys to satisfy privacy requirements, and (iii) maintains the desired rateless property of non-private Fountain codes. Furthermore, we provide a closed form task completion delay analysis of PRAC. Finally, we evaluate the performance of PRAC via simulations as well as in a testbed consisting of real Android-based smartphones as compared to baselines. 

The use of Fountain codes in encoding the sub-tasks provides PRAC flexibility on the number of stragglers and on the computing capacity of workers, reflected by the number of sub-tasks assigned to each worker. In contrast, existing solutions for secure coded computing require the master to set a threshold on the number of stragglers that it can tolerate and pre-assign the sub-tasks to the workers based on this threshold. 


{\em Organization.} The structure of the rest of this paper is as follows. We start with presenting the system model in Section~\ref{sec:system}. Section~\ref{sec:PRAC} presents the design of private and rateless adaptive coded computation (PRAC). We characterize and analyze PRAC in Section~\ref{sec:theo}. We present evaluation results in section~\ref{sec:exp}. Section~\ref{sec:related} presents related work. Section~\ref{sec:conc} concludes the paper.





\section{\label{sec:system} System Model}

{\em Setup.} We consider a master/workers setup at the edge of the network, where the master device $\m$ offloads its computationally intensive tasks to workers $w_i,\ i \in \mathcal{N},$ (where $|\mathcal{N}| = n$) via device-to-device (D2D) links such as Wi-Fi Direct and/or Bluetooth. The master device divides a task into smaller sub-tasks, and offloads them to workers that process these sub-tasks in parallel.


{\em Task Model.}
We focus on the computation of linear functions, \ie matrix-vector multiplication. We suppose the master wants to compute the matrix vector product $A\mathbf{x}$, where $A \in \mathbb{F}_q^{m\times \ell}$ can be thought of as the data matrix and $\mathbf{x}\in \mathbb{F}_q^{\ell}$ can be thought of as an attribute vector. We assume that the entries of $A$ and $\mathbf{x}$ are drawn independently and uniformly at random\footnote{We abuse notation and denote both the random matrix representing the data and its realization by $A$. We do the same for $\mathbf{x}$.} from $\mathbb{F}_q$. The motivation stems from machine learning applications where computing linear functions is a building block of several iterative algorithms \cite{seber2012linear,suykens1999least}. For instance, the main computation of a gradient descent algorithm with squared error loss function is \begin{equation}\label{eq:gradup}
\mathbf{x}^{+} = \mathbf{x} - \alpha A^T(A\mathbf{x}-\mathbf{y}),
\end{equation}
where $\mathbf{x}$ is the value of the attribute vector at a given iteration, $\mathbf{x}^{+}$ is the updated value of $\mathbf{x}$ at this iteration and the learning rate $\alpha$ is a parameter of the algorithm. Equation~\eqref{eq:gradup} consists of computing two linear functions $A\mathbf{x}$ and $A^T\mathbf{w}\triangleq A^T(A\mathbf{x}-\mathbf{y})$.




{\em Worker and Attack Model.} 
The workers incur random delays while executing the task assigned to them by the master device. The workers have different computation and communication specifications resulting in a heterogeneous environment which includes workers that are significantly slower than others, known as stragglers. Moreover, the workers cannot be trusted with the master's data. We consider an {\em eavesdropper adversary} in this paper, where one or more of workers can be eavesdroppers and can spy on the coded data sent to these devices for computations. We assume that up to $z$, $z<n$, workers can collude, \ie $z$ workers can share the data they received from the master in order to obtain information about $A$. The parameter $z$ can be chosen based on the desired privacy level; a larger $z$ means a higher privacy level and vice versa. One would want to set $z$ to the largest possible value for maximum, $z=n-1$ security purposes. However, this has the drawback of increasing the complexity and the runtime of the algorithm. In our setup we assume that $z$ is a fixed and given system parameter.


{\em Coding \& Secret Keys.} The matrix $A$ can be divided into $b$ row blocks (we assume that $b$ divides $m$, otherwise all-zero rows can be added to the matrix to satisfy this property) denoted by $A_i$, $i=1,\dots, b$. The master applies Fountain coding \cite{LT, Raptor, Fountain} across row blocks to create information packets $\nu_j \triangleq \sum_{i=1}^m c_{i,j}A_i$, $j=1,2,\dots,$ where the $c_{i,j}\in\{0,1\}$. 
Note that an information packet is a matrix of dimension $m/b \times \ell$, i.e., $\nu_j \in \mathbb{F}_q^{m/b\times \ell}$. 
Such rateless coding is compatible with our goal to create adaptive coded cooperation computation framework.
In order to maintain privacy of the data, the master device generates random matrices $R_i$ of dimension $m/b\times \ell$ called \textit{keys}. The entries of the $\mbr_i$ matrices are drawn uniformly at random from the same field as the entries of $A$. Each information packet $\nu_j$ is {\em padded} with a linear combination of $z$ keys $f_j( R_{i,1},\dots,R_{i,z})$ to create a secure packet $s_j \in \mathbb{F}_q^{m/b \times \ell}$ defined as $s_j\triangleq \nu_j + f_j( R_{i,1},\dots,R_{i,z})$.

The master device sends $\mathbf{x}$ to all workers, then it sends the keys and the $s_j$'s to the workers according to our PRAC scheme described later. Each worker multiplies the received packet by $\mathbf{x}$ and sends the result back to the master. Since the encoding is rateless, the master keeps sending packets to the workers until it can decode $A\mathbf{x}$. The master then sends a stop message to all the workers.

{\em Privacy Conditions.}
Our primary requirement is that any collection of $z$ (or less) workers will not be able to obtain any information about $A$, in an information theoretic sense.

In particular, let $P_i$, $i=1\dots,n$, denote the collection of packets sent to worker $w_i$. For any set $\mathcal{B}\subseteq \{1,\dots,n\}$, let $P_\mathcal{B}\triangleq\{P_i, i\in\mathcal{B}\}$ denote the collection of packets given to worker $w_i$ for all $i\in \mathcal{B}$. The privacy requirement\footnote{In some cases the vector $\mathbf{x}$ may contain information about $A$ and therefore must not be revealed to the workers. We explain in Appendix~A how to generalize our scheme to account for such cases.} can be expressed as
\begin{align}\label{eq:priv}
 H(A|P_\mathcal{Z})&=H(A),\quad  \forall \mathcal{Z}\subseteq \{1,\dots,n\} \text{ s.t. } |\mathcal{Z}|\leq z.
 \end{align}

$H(A)$ denotes the entropy, or uncertainty, about $A$ and $ H(A|P_\mathcal{Z})$ denotes the uncertainty about $A$ after observing $P_\mathcal{Z}$.

{\em Delay Model.} 
Each packet transmitted from the master to a worker $w_i,\ i=1,2,...,n,$ experiences the following delays: (i) transmission delay for sending the packet from the master to the worker, (ii) computation delay for computing the multiplication of the packet by the vector $\mathbf{x}$, and (iii) transmission delay for sending the computed packet from the worker $w_i$ back to the master. We denote by $\beta_{t,i}$ the computation time of the $t^\text{th}$ packet at worker $w_i$ and $RTT_i$ denotes the {average} round-trip time spent to send and receive a packet from worker $w_i$. The time spent by the master is equal to the time taken by the $(z+1)^\text{st}$ fastest worker to finish its assigned tasks.

\begin{table}
\normalsize
    \caption{Summary of notations.}
    \label{tab:my_label}
    \centering
    \begin{tabular}{c|c}
       Symbol  & Meaning \\ \hline
       $\m$ & master\\
       $w_i$ & worker $i$ \\
       $n$  & number of workers \\
       $A$ & $m\times \ell$ data matrix\\
       $\mathbf{x}$ & $\ell \times 1$ attribute vector\\
       $z$ & number of colluding workers\\
       $m$ & number of rows in $A$\\
       $\varepsilon$ & overhead of Fountain codes\\
        $A_i$ & $i^\text{th}$ row block of data matrix $A$\\
       $R$ & random matrix\\
       $RTT_i$ & {average} round trip time to send and receive packet $p_i$\\
       $\beta_{t,i}$ & computation time of the $t^\text{th}$ packet at $w_i$\\
       $\nu$ & Fountain coded packet of $A_i$'s\\
       $s$ & secure Fountain coded packet\\
       $T_i$ & time to compute a packet at $w_i$\\
       $T_{(d)}$ & $d^\text{th}$ order statistic of $T_i$'s\\
       $T$ & time spent by $\m$ to decode $A\mathbf{x}$\\

    \end{tabular}

\end{table}

\begin{table*}[t]
\caption{Depiction of PRAC in the presence of stragglers. The master keeps generating packets using Fountain codes until it can decode $\mba \mbx$. The master estimates the average task completion time of each worker and sends a new packet to avoid idle time. Each new packet sent to a worker must be secured with a new random key. The master can decode $\mba_1\mathbf{x}, \dots, \mba_6 \mathbf{x}$ after receiving all the packets not having $\mbr_{4,1}$ or $\mbr_{4,2}$ in them.}
\label{fig:colfig2}
\centering
\resizebox{0.95\textwidth}{!}{
\begin{tikzpicture}
\node at (0,0) {
\begin{tabular}{c|c|c|c|c|c}
\underline{Time} & \underline{Worker $1$}& \underline{Worker $2$} & \underline{Worker $3$} & \underline{Worker $4$} \\
1 &$\mbr_{1,1}$ & $\mbr_{1,2}$ & $ {\blue \bf \mba_4} + \mbr_{1,1} + \mbr_{1,2}$ & ${\blue \bf \mba_3 + \mba_4 + \mba_6}+ \mbr_{1,1} + 2 \mbr_{1,2}$ \\
2 & \cellcolor{gray!25} & \cellcolor{gray!25} &\cellcolor{gray!25}&$\mbr_{2,1}$\\
3 & $\mbr_{2,2}$ & \cellcolor{gray!25}&\cellcolor{gray!25}&\cellcolor{gray!25}\\
4 &\cellcolor{gray!25} &  $ {\blue \bf \mba_3} + \mbr_{2,1} + \mbr_{2,2}$  & $ {\blue \bf \mba_4 + \mba_5 } + \mbr_{2,1} + 2\mbr_{2,2}$ & \cellcolor{gray!25} \\
5& \cellcolor{gray!25}& $ \mbr_{3,1}$ & \cellcolor{gray!25} & \cellcolor{gray!25}\\
6&  ${\blue \bf \mba_2 } +\mbr_{3,1} + \mbr_{3,2}$ &\cellcolor{gray!25} &\cellcolor{gray!25}&$ \mbr_{3,2}$\\
7&  \cellcolor{gray!25}& $ \mbr_{4,1}$ & ${\blue \bf \mba_1 } +\mbr_{3,1} +2 \mbr_{3,2}$&\cellcolor{gray!25}\\
8&  $ \mbr_{4,2}$& \cellcolor{gray!25} & \cellcolor{gray!25} & ${\blue \bf \mba_2+\mba_3 } +\mbr_{4,1} + \mbr_{4,2}$\\
\end{tabular}};
\end{tikzpicture}
}

\end{table*}

\section{Design of PRAC}\label{sec:PRAC}

\subsection{Overview} 
We present the detailed explanation of PRAC. Let $p_{t,i}\in\mathbb{F}_q^{m/b\times \ell}$ be the $t^\text{th}$ packet sent to worker $w_i$. This packet can be either a key or a secure packet. For each value of $t$, the master sends $z$ keys denoted by $R_{t,1},\dots, R_{t,z}$ to $z$ different workers and up to $n-z$ secure packets $s_{t,1},\dots,s_{t,n-z}$ to the remaining workers. The master needs the results of $b+\epsilon$ information packets, \ie $\nu_{t,i}\mathbf{x}$, to decode the final result $A\mathbf{x}$, where $\epsilon$ is the overhead required by Fountain coding\footnote{The overhead required by Fountain coding is typically as low as $5\%$ \cite{Fountain}, \ie $\epsilon=0.05b$}. To obtain the results of $b+\epsilon$ information packets, the master needs the results of $b+\epsilon$ secure packets, $s_{t,i}\mathbf{x} = (\nu_{i,j}+f_j(R_{t,i},\dots,R_{t,z}))\mathbf{x}$, together with all the corresponding\footnote{Recall that $f_j( R_{t,1},\dots,R_{t,z})$ is a linear function, thus it is easy to extract $(R_{t,i})\mathbf{x}, i=1,...,z$, from $(f_j(R_{t,1},\dots,R_{t,z}))\mathbf{x}$.} $R_{t,i}\mathbf{x}, i=1,\dots,z$. Therefore, only the results of the $s_{t,i}\mathbf{x}$ for which all the computed keys $R_{t,i}\mathbf{x}, i=1,...,z,$ are received by the master can account for the total of $b+\epsilon$ information packets.

\subsection{\label{sec:DRA} Dynamic Rate Adaptation} 

The dynamic rate adaptation part of PRAC is based on \cite{KS18}. In particular, 
the master offloads coded packets gradually to workers and receives two ACKs for each transmitted packet; one confirming the receipt of the packet by the worker, and the second one (piggybacked to the computed packet) showing that the packet is computed by the worker. Then, based on the frequency of the received ACKs, the master decides to transmit more/less coded packets to that worker. In particular, each packet $p_{t,i}$ is transmitted to each worker $w_i$ before or right after the computed packet $p_{t-1,i}\mathbf{x}$ is received at the master. For this purpose, the average per packet computing time $\E[\beta_{t,i}]$ is calculated for each worker $w_i$ dynamically based on the previously received ACKs. Each packet $p_{t,i}$ is transmitted after waiting $\E[\beta_{t,i}]$ from the time $p_{t-1,i}$ is sent or right after packet $p_{t-1,i}\mathbf{x}$ is received at the master, thus reducing the idle time at the workers. This policy is shown to approach the optimal task completion delay and maximizes the workers' efficiency and is shown to improve task completion time significantly compared with the literature \cite{KS18}.

\subsection{Coding}\label{sec:Cod}

We explain the coding scheme used in PRAC. We start with an example to build an intuition and illustrate the scheme before going into details.


\begin{example}
Assume there are $n=4$ workers out of which any $z = 2$ can collude. Let $A$ and $\mathbf{x}$ be the data owned by the master and the vector to be multiplied by $A$, respectively. The master sends $\mathbf{x}$ to all the workers. For the sake of simplicity, assume $A$ can be divided into $b=6$ row blocks, \ie $A=\bbm A_1^T & A_2^T & \dots & A_6^T \ebm^T$. The master encodes the $A_i$'s using Fountain code. We denote by \emph{round} the event when the master sends a new packet to a worker. For example, we say that worker~$1$ is at round~$3$ if it has received $3$ packets so far. For every round $t$, the master generates $z=2$ random matrices $R_{t,1},\ \mbr_{t,2}$ (with the same size as $A_1$) and encodes them using an $(n,z)=(4,2)$ systematic maximum distance separable (MDS) code by multiplying $\mbr_{t,1},\ \mbr_{t,2}$ by a generator matrix $G$ as follows
\be
G \bbm \mbr_{t,1}\\ \mbr_{t,2}\ebm \triangleq 
\bbm 
1 & 0 \\
0 & 1 \\
1 & 1\\
1 & 2\\
\ebm
\bbm \mbr_{t,1}\\ \mbr_{t,2}\ebm.
\ee
This results in the encoded matrices of $\mbr_{t,1}$, $\mbr_{t,2}$, $\mbr_{t,1}+\mbr_{t,2}$, and $\mbr_{t,1}+2\mbr_{t,2}$. %
Now let us assume that workers can be stragglers. At the beginning, the master initializes all the workers at round~$1$. Afterwards, when a worker $w_i$ finishes its task, the master checks how many packets this worker has received so far and how many other workers are at this round. If this worker $w_i$ is the first or second to be at round $t$, the master generates $\mbr_{t,1}$ or $\mbr_{t,2}$, respectively, and sends it to $w_i$. Otherwise, if $w_i$ is the $j^\text{th}$ worker ($j>2$) to be at round $t$, the master multiplies $\bbm \mbr_{t,1}&\mbr_{t,2}\ebm^T$ by the $j^{\text{th}}$ row of $G$, adds it to a generated Fountain coded packet, and sends it to $w_i$. The master keeps sending packets to the workers until it can decode $\mba \mbx$. We illustrate the idea in Table~\ref{fig:colfig2}.

\end{example}

We now explain the details of PRAC in the presence of $z$ colluding workers. 
\begin{enumerate}
\item {\em Initialization:} The master divides $A$ into $b$ row blocks $\mba_1,\dots,\mba_b$ and sends the vector $\mathbf{x}$ to the workers. Let $G \in \mathbb{F}_q^{n\times z}$, $q>n$, be the generator matrix of an $(n,z)$ systematic MDS code. For example one may use systematic Reed-Solomon codes that use Vandermonde matrix as generator matrix, see for example \cite{lacan2009reed}. The master generates $z$  random matrices $\mbr_{1,1},\dots, \mbr_{1,z}$ and encodes them using $G$. Each coded key can be denoted by $\mathbf{g}_i\mathcal{R}$ where $\mathbf{g}_i$ is the $i^\text{th}$ row of $G$ and $\mathcal{R}\triangleq \bbm \mbr_{1,1}^T&\dots& \mbr_{1,z}^T\ebm^T$. The master sends the $z$ keys $\mbr_{1,1},\dots, \mbr_{1,z}$ to the first $z$ workers, generates $n-z$ Fountain coded packets of the $\mba_i$'s, adds to each packet an encoded random key $\mathbf{g}_i\mathcal{R}$, $i=z+1,\dots n$, and sends them to the remaining $n-z$ workers.

\item {\em Encoding and adaptivity:} When the master wants to send a new packet to a worker (noting that a packet $p_{t,i}$ is transmitted to worker $w_i$ before, or right after, the computed packet $p_{t-1,i}\mathbf{x}$ is received at the master according to the strategy described in Section~\ref{sec:DRA}), it checks at which round this worker is, \ie how many packets this worker has received so far, and checks how many other workers are at least at this round. Assume worker $w_i$ is at round $t$ and $j-1$ other workers are at least at this round. If $j\leq z$, the master generates and sends $\mbr_{t,j}$ to the worker. However, if $j>z$ the master generates a Fountain coded packet of the $\mba_i$'s (\eg $A_1+A_2$), adds to it $\mathbf{g}_j\mathcal{R}$ and sends the packet ($A_1+A_2 + \mathbf{g}_j\mathcal{R}$) to the worker. Each worker computes the multiplication of the received packet by the vector $\mathbf{x}$ and sends the result to the master.

\item {\em Decoding and speed:} Let $\tau_i$ denote the number of packets sent to worker $i$. We define $\tau_{max}\triangleq \max_i \tau_i$ such that at the end of the process the master has $\mbr_{t,i} \mathbf{x}$ for all $t=1,\dots,\tau_{max}$ and all $i=1,\dots,z$. The master can therefore subtract $\mbr_{t,i}$, $t=1,\dots,\tau_{max}$ and $i=1,\dots,z$, from all received secure information packets, and thus can decode the $\mba_i$'s using the Fountain code decoding process. The number of secure packets that can be used to decode the $\mba_i$'s is dictated by the $(z+1)^\text{st}$ fastest worker, \ie the master can only use the results of secure information packets computed at a given round if at least $z+1$ workers have completed that round. If for example the $z$ fastest workers have completed round $100$ and the $(z+1)^\text{st}$ fastest worker has completed round $20$, the master can only use the packets belonging to the first $20$ rounds. The reason is that the master needs all the keys corresponding to a given round in order to use the secure information packet for decoding. In~Lemma~\ref{lemma:z+1} we prove that this scheme is optimal, \ie in private coded computing the master cannot use the packets computed at rounds finished by less than $z+1$ workers irrespective of the coding scheme.
\end{enumerate}


\section{Performance Analysis of PRAC}\label{sec:theo}


\subsection{Privacy}
In this section, we provide theoretical analysis of PRAC by particularly focusing on its privacy properties. 
\begin{theorem}\label{thm:pracPrivacy}
PRAC is a rateless real-time adaptive coded computing scheme that allows a master device to run distributed linear computation on private data $A$ via $n$ workers while satisfying the privacy constraint given in~\eqref{eq:priv} for a given $z<n$. 
\end{theorem}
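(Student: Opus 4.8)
The plan is to separate the three claims bundled in the statement---ratelessness, real-time adaptivity, and privacy---and to observe that the first two hold essentially by construction, so that the real work is in verifying the information-theoretic privacy condition~\eqref{eq:priv}. Ratelessness is inherited directly from the use of Fountain codes in forming the information packets $\nu_j$: the master simply keeps generating and dispatching coded packets until it has collected $b+\epsilon$ decodable information-packet results, with no a priori bound on the number of packets per worker. Real-time adaptivity is exactly the dynamic rate adaptation policy of Section~\ref{sec:DRA}, under which each $p_{t,i}$ is released to $w_i$ based on the estimated per-packet computation time $\E[\beta_{t,i}]$. Decodability of $A\mathbf{x}$ is then immediate: since the master eventually receives $R_{t,i}\mathbf{x}$ for every $t\le\tau_{max}$ and every $i\le z$, it can strip the key contribution $(\mathbf{g}_j\mathcal{R})\mathbf{x}$ from each received secure result, recovering $b+\epsilon$ clean products $\nu_j\mathbf{x}$, which suffice to invert the Fountain code and obtain $A\mathbf{x}$ with high probability. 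Hence the core of the theorem is to establish $H(A\mid P_\mathcal{Z})=H(A)$ for every $\mathcal{Z}$ with $|\mathcal{Z}|\le z$.

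To prove privacy I would first reduce the multi-round observation to a single round. Write $P_\mathcal{Z}=(O_1,\dots,O_{\tau_{max}})$, where $O_t$ collects the round-$t$ packets seen by the colluding set $\mathcal{Z}$. Because a fresh, independent batch of keys $R_{t,1},\dots,R_{t,z}$ is drawn at each round, the $O_t$ are conditionally independent given $A$, so $p(P_\mathcal{Z}\mid A)=\prod_t p(O_t\mid A)$. Consequently it suffices to show $p(O_t\mid A)=p(O_t)$ for each fixed $t$; the per-round independences then multiply to give $p(P_\mathcal{Z}\mid A)=\prod_t p(O_t)=p(P_\mathcal{Z})$, which is exactly~\eqref{eq:priv}.

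The single-round argument is the heart of the proof and rests on the MDS property of $G$. Within round $t$ each worker that has reached the round is the $j$-th such worker for a distinct index $j$, and it receives either the raw key $R_{t,j}=\mathbf{g}_j\mathcal{R}$ (when $j\le z$, using a systematic row $\mathbf{g}_j=e_j$) or the padded packet $\nu+\mathbf{g}_j\mathcal{R}$ (when $j>z$). Thus the colluding workers that have reached round $t$ see observations of the form $\mathbf{g}_j\mathcal{R}+d_j$ for a set $J_t$ of \emph{distinct} row indices with $|J_t|\le|\mathcal{Z}|\le z$, where $d_j$ is either $0$ or a Fountain-coded function of $A$. Stacking these gives $O_t=G_{J_t}\mathcal{R}+\mathbf{d}$, with $G_{J_t}$ the submatrix of $G$ on the rows in $J_t$. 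Since $G$ generates an $(n,z)$ MDS code, any $z$ (hence any $|J_t|\le z$) of its rows are linearly independent, so $G_{J_t}$ has full row rank $|J_t|$. Applying this coordinate-wise to the $m/b\times\ell$ key blocks, whose entries are i.i.d.\ uniform on $\mathbb{F}_q$, the tuple $G_{J_t}\mathcal{R}$ is uniformly distributed on $\big(\mathbb{F}_q^{m/b\times\ell}\big)^{|J_t|}$ and independent of $A$; adding the deterministic $\mathbf{d}$ acts as a one-time pad and leaves the distribution uniform, so $p(O_t\mid A)=p(O_t)$.

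The main obstacle---and the step I would write most carefully---is the full-rank/one-time-pad claim: one must verify that the distinctness of the rows $\{\mathbf{g}_j:j\in J_t\}$ really follows from the round bookkeeping (each worker at a round occupies a unique arrival order), and that $|J_t|\le z$ holds precisely because at most $z$ workers collude, so the relevant submatrix of $G$ never exceeds $z$ rows and MDS linear independence applies verbatim. Everything else is routine: the reduction across rounds is immediate given key freshness, and decodability is a direct consequence of key cancellation followed by standard Fountain decoding.
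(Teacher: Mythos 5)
Your proof is correct, and it reaches the privacy condition by a genuinely different route than the paper. The paper reduces to a single round exactly as you do, but then invokes the standard secret-sharing identity $H(A\mid P_{\mathcal{Z}})=H(A)-H(K\mid P_{\mathcal{Z}},A)$ (derived in Appendix~B from $H(P_{\mathcal{Z}}\mid A,K)=0$, the independence of keys and data, and the count $H(K)=H(P_{\mathcal{Z}})=z$), so that privacy becomes equivalent to \emph{key decodability}: given $A$ as side information, any $z$ packets are $z$ codewords of the $(n,z)$ MDS code on the keys and hence determine $\mathcal{R}$, i.e., $H(K\mid P_{\mathcal{Z}},A)=0$. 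You instead argue \emph{directly on distributions}: the colluders' round-$t$ view is $G_{J_t}\mathcal{R}+\mathbf{d}$ with $\lvert J_t\rvert\le z$ distinct rows, full row rank of $G_{J_t}$ makes $G_{J_t}\mathcal{R}$ uniform and independent of $A$, and the one-time-pad step gives $p(O_t\mid A)=p(O_t)$. Both proofs hinge on the same MDS fact (any $z$ rows of $G$ are independent), used once as invertibility and once as surjectivity, so the mathematical content is the same; what differs is the packaging. Your version is more elementary and self-contained (no entropy bookkeeping, and it handles $\lvert\mathcal{Z}\rvert<z$ and partially completed rounds without comment), while the paper's version buys the byproduct recorded in its Remark~2: the inequality $H(K)\ge H(P_{\mathcal{Z}})$ falls out of the same chain and shows that $z$ keys per round is the minimum possible, which your argument does not yield. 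The one step you flag as delicate --- distinctness of the row indices within a round --- is indeed guaranteed by the scheme's bookkeeping (the $j$-th worker to reach round $t$ receives row $\mathbf{g}_j$, with $\mathbf{g}_j=e_j$ for $j\le z$), so there is no gap.
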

\begin{proof}
Since the random keys are generated independently at each round, it is sufficient to study the privacy of the data on one round and the privacy generalizes to the whole algorithm. We show that for any subset $Z\subset\{1,\dots,n\}, |{Z}| = z$, the collection of packets $p_{Z}\triangleq \{p_{t,i}, i\in Z\}$ sent at round $t$ reveals no information about the data $A$ as given in \eqref{eq:priv}, \ie $H(A) = H(A|p_Z)$. Let $K$ denote the random variable representing all the keys generated at round $t$, then it is enough to show that $H(K|A,p_{Z} ) = 0$ as detailed in Appendix~\ref{appen:proofPrivacy}. Therefore, we need to show that given $A$ as side information, any $z$ workers can decode the random keys $R_{t,1}, \dots, R_{t,z}$. Without loss of generality assume the workers are ordered from fastest to slowest, \ie worker $w_1$ is the fastest at the considered round $t$. Since the master sends $z$ random keys to the fastest $z$ workers, then $p_{t,i}=R_{t,i}, i=1,\dots,z$. The remaining $n-z$ packets are secure information packets sent to the remaining $n-z$ workers, \ie $p_{t,i}=s_{t,i}= \nu_{t,i} + f(R_{t,1}, \dots, R_{t,z})$, where $\nu_{t,i}$ is a linear combination of row blocks of $A$ and $f(R_{t,1}, \dots, R_{t,z})$ is a linear combination of the random keys generated at round $t$. Given the data $A$ as side information, any collection of $z$ packets can be expressed as $z$ codewords of the $(n,z)$ MDS code encoding the random keys. Thus, given the matrix $A$, any collection of $z$ packets is enough to decode all the keys and $H(K|S,p_{Z} ) = 0$ which concludes the proof.
\end {proof}

\begin{remark}
PRAC requires the master to wait for the $(z+1)^\text{st}$ fastest worker in order to be able to decode $A\mathbf{x}$. We show in Lemma~\ref{lemma:z+1} that this limitation is a byproduct of all private coded computing schemes.
\end{remark}

\begin{remark}
PRAC uses the minimum number of keys required to guarantee the privacy constraints. At each round PRAC uses exactly $z$ random keys which is the minimum amount of required keys (c.f. Equation~\eqref{eq:keys} in Appendix~\ref{appen:proofPrivacy}).\end{remark}

\begin{lemma}\label{lemma:z+1}
Any private coded computing scheme for distributed linear computation limits the master to the speed of the $(z+1)^\text{st}$ fastest worker.
\end{lemma}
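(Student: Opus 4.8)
The plan is to prove the impossibility statement underlying the lemma: no private coded computing scheme can let the master decode $A\mathbf{x}$ from the responses of $z$ or fewer workers. Once that is established, the timing conclusion is immediate, since recovering useful information from at least $z+1$ distinct workers forces the master to wait until the $(z+1)^{\text{st}}$ fastest worker has finished, i.e. until the $(z+1)^{\text{st}}$ order statistic $T_{(z+1)}$ of the per-worker completion times.

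First I would argue by contradiction. Suppose there is a set $\mathcal{Z}\subseteq\{1,\dots,n\}$ with $|\mathcal{Z}|\le z$ such that the master recovers $A\mathbf{x}$ from the computed results returned by the workers in $\mathcal{Z}$. In any scheme the response of worker $w_i$ is a deterministic function of the packets $P_i$ it received together with the publicly known vector $\mathbf{x}$; hence $A\mathbf{x}$ is a deterministic function of $(P_{\mathcal{Z}},\mathbf{x})$, so $H(A\mathbf{x}\mid P_{\mathcal{Z}})=0$, where throughout I treat $\mathbf{x}$, which is independent of $A$, as known.

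Next I would quantify how much $A\mathbf{x}$ reveals about $A$. Because the entries of $A$ and $\mathbf{x}$ are drawn uniformly and independently from $\mathbb{F}_q$, for any fixed $\mathbf{x}\neq 0$ the linear map $A\mapsto A\mathbf{x}$ is surjective onto $\mathbb{F}_q^{m}$, so $A\mathbf{x}$ is uniform and $I(A;A\mathbf{x})=H(A\mathbf{x})-H(A\mathbf{x}\mid A)=m\log q>0$, the second term vanishing since $A\mathbf{x}$ is itself a function of $A$. (The degenerate event $\mathbf{x}=0$ has probability $q^{-\ell}$ and is excluded, or handled by conditioning on a fixed nonzero $\mathbf{x}$.) Since $A\mathbf{x}$ is a function of $P_{\mathcal{Z}}$, data processing gives
\[
H(A\mid P_{\mathcal{Z}})\le H(A\mid A\mathbf{x})=H(A)-I(A;A\mathbf{x})=H(A)-m\log q<H(A),
\]
which contradicts the privacy requirement $H(A\mid P_{\mathcal{Z}})=H(A)$ of~\eqref{eq:priv}. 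Thus no collection of $z$ (or fewer) workers' responses can determine $A\mathbf{x}$, and the master's decoder must genuinely depend on the responses of at least $z+1$ distinct workers.

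Finally, to turn this into the timing claim I would note that, whichever $z+1$ workers the master relies on, its completion time is at least the largest completion time among those $z+1$, and this quantity is minimized by choosing the $z+1$ fastest, whose slowest member is precisely the $(z+1)^{\text{st}}$ fastest worker; hence the master cannot finish before $T_{(z+1)}$. The main obstacle I anticipate is not the information-theoretic core, which is a short data-processing argument, but phrasing it at the right level of generality so that it truly covers \emph{every} private coded computing scheme: one must justify, model-independently, that each worker's return depends only on its own received packets and the public $\mathbf{x}$, and that ``using the responses of $z$ workers'' faithfully captures all the information those workers can feed to any decoder.
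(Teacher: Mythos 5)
Your proof is correct and follows essentially the same route as the paper's: a contradiction argument showing that decoding $A\mathbf{x}$ from the responses of only $z$ workers forces $H(A\mid P_{\mathcal{Z}})<H(A)$, violating the privacy condition~\eqref{eq:priv}, so the master must wait for the $(z+1)^{\text{st}}$ fastest worker. The only difference is one of rigor: where the paper dispatches the key step with ``by linearity of the multiplication $A\mathbf{x}$, decoding information about $A\mathbf{x}$ implies decoding information about $A$,'' you make it precise by computing $I(A;A\mathbf{x})=m\log q>0$ for nonzero $\mathbf{x}$ and invoking the data-processing inequality along $A\to P_{\mathcal{Z}}\to A\mathbf{x}$, which is a worthwhile tightening of the weakest link in the published argument.
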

\begin{proof}
The proof of Lemma~\ref{lemma:z+1} is provided in Appendix~\ref{appen:proofLm}.
\end {proof}

\subsection{Task Completion Delay}
In this section, we characterize the task completion delay of PRAC and compare it with Staircase codes \cite{BPR17}, which are secure against eavesdropping attacks in a coded computation setup with homogeneous resources. First, we start with task completion delay characterization of PRAC.

\begin{theorem}\label{thm:pracCompTime}
Let $b$ be the number of row blocks in $A$, let $\beta_{t,i}$ denote the computation time of the $t^\text{th}$ packet at worker $w_i$ and let $RTT_i$ denote the {average} round-trip time spent to send and receive a packet from worker $i$. The task completion time of PRAC is approximated as
\begin{align}\label{eq:ETp}
T_{PRAC} \approx &\max_{i\in \{1,\dots,n\}} \{ RTT_i\} + \frac{b+\epsilon}{\sum_{i=z+1}^n 1/\E[\beta_{t,i}]},\\
\approx &\frac{b+\epsilon}{\sum_{i=z+1}^n 1/\E[\beta_{t,i}]}, \label{eq:T_{PRAC}}
\end{align}
where $w_i$'s are ordered indices of the workers from fastest to slowest, \ie $w_1 = \arg \min_{i} \mathbb{E}[\beta_{t,i}]$.
\end{theorem}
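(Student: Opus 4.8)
The plan is to treat the task-completion time through a fluid (renewal) approximation, decomposing $T_{PRAC}$ into a one-time pipeline-filling overhead and a steady-state computation phase, and then counting how fast \emph{usable} information packets accumulate at the master until they reach the decoding threshold $b+\epsilon$. The central observation is that, because of the dynamic rate adaptation of Section~\ref{sec:DRA}, each worker $w_i$ is kept essentially continuously busy once its pipeline is filled, so by the elementary renewal theorem applied to the (approximately i.i.d.) per-packet computation times, the number $N_i(t)$ of packets it returns by time $t$ grows linearly at rate $1/\E[\beta_{t,i}]$, i.e.\ $N_i(t)\approx t/\E[\beta_{t,i}]$.

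Next I would identify which of those returned packets count toward the $b+\epsilon$ information packets needed for decoding. By the coding rule, at every round the first $z$ workers to reach that round are assigned keys and the remaining workers are assigned secure information packets; assuming the per-worker speeds $\E[\beta_{t,i}]$ are stable across rounds, the same $z$ fastest workers $w_1,\dots,w_z$ consistently reach each round first and hence compute keys, while workers $w_{z+1},\dots,w_n$ produce the information packets. Crucially, because $w_1,\dots,w_z$ are the fastest, the $z$ key results $R_{t,i}\mathbf{x}$ of any round $t$ are returned \emph{before} the $(z{+}1)^{\text{st}}$ worker finishes round $t$; thus every information packet returned by a worker in $\{w_{z+1},\dots,w_n\}$ satisfies the ``$z{+}1$ workers completed the round'' condition of Section~\ref{sec:Cod} and is immediately usable for decoding. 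Summing the per-worker rates over exactly these workers gives an aggregate usable-information-packet rate of $\sum_{i=z+1}^{n} 1/\E[\beta_{t,i}]$.

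I would then close the argument with a counting balance: setting the accumulated number of usable information packets equal to the decoding requirement, $\big(\sum_{i=z+1}^{n} 1/\E[\beta_{t,i}]\big)\,T_{\text{comp}} \approx b+\epsilon$, and solving for the steady-state computation time $T_{\text{comp}}$ yields the second term of~\eqref{eq:ETp}. The additive $\max_{i}\{RTT_i\}$ term accounts for the transient before steady state, namely the round trip needed to send $\mathbf{x}$ and the first packet out and to receive the first result back so that every pipeline is primed; bounding this transient by the slowest round-trip time gives $\max_i\{RTT_i\}$. Finally, \eqref{eq:T_{PRAC}} follows because in the regime of interest (large $b$, so the computation term dominates the fixed round-trip overhead) the additive constant is negligible.

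The main obstacle I expect is rigorously justifying the stability of the role assignment, i.e.\ that the fastest $z$ workers really are the ones that keep drawing keys round after round rather than the assignment drifting; this is precisely where the ``$\approx$'' in the statement is doing work, since over short horizons a slow worker could transiently overtake a fast one. The clean way to handle it is to invoke the law of large numbers on the cumulative counts $N_i(t)\approx t/\E[\beta_{t,i}]$: in the fluid limit the ordering of the $N_i(t)$'s is governed by the means $\E[\beta_{t,i}]$, so asymptotically the fastest $z$ workers dominate the key assignments and the remaining $n-z$ workers supply the usable information packets at the claimed aggregate rate. A secondary point to verify is that the keys never become the bottleneck, which follows from the same ordering argument (the $z$ keys of each round are ready before the $(z{+}1)^{\text{st}}$ worker reports), so the rate is governed solely by $w_{z+1},\dots,w_n$, matching the denominator in~\eqref{eq:ETp}.
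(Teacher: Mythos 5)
Your proposal is correct and follows essentially the same route as the paper's proof: both treat each worker $w_i$ as fully utilized (so it returns packets at rate $1/\E[\beta_{t,i}]$), attribute the keys to the $z$ fastest workers and the $b+\epsilon$ information packets to the remaining $n-z$, absorb the round-trip time as a one-time pipeline overhead negligible for large $b$, and solve the counting balance $\bigl(\sum_{i=z+1}^{n} 1/\E[\beta_{t,i}]\bigr)T \approx b+\epsilon$. Your additional remarks on the stability of the key assignment and on the keys never becoming the bottleneck make explicit assumptions the paper leaves implicit, but they do not change the argument.
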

\begin{proof}
The proof of Theorem~\ref{thm:pracCompTime} is provided in Appendix~\ref{appen:proofTh3}. 
\end{proof}




Now that we characterized the task completion delay of PRAC, we can compare it with the state-of-the-art. Secure coded computing schemes that exist in the literature usually use static task allocation, where tasks are assigned to workers a priori. 
The most recent work in the area is Staircase codes, which is 
shown to outperform all existing schemes that use threshold secret sharing \cite{BPR17}. However, Staircase codes are static; they allocate fixed amount of tasks to workers a priori. Thus, Staircase codes cannot leverage the heterogeneity of the system, neither can it adapt to a system that is changing in time. On the other hand, our solution PRAC adaptively offloads tasks to workers by taking into account the heterogeneity and time-varying nature of resources at workers. Therefore, we restrict our focus on comparing PRAC to Staircase codes. 

Staircase codes assigns a task of size $b/(k-z)$ row blocks to each worker.\footnote{Note that in addition to $n$ and $z$, all threshold secret sharing based schemes require a parameter $k, \ z<k<n,$ which is the minimum number of non stragglers that the master has to wait for before decoding $A\mathbf{x}$.} Let $T_i$ be the time spent at worker $i$ to compute the whole assigned task. Denote by $T_{(i)}$ the $i^{th}$ order statistic of the $T_i$'s and by $T_{\text{SC}}(n,k,z)$ the task completion time, \ie  time the master waits until it can decode $A\mathbf{x}$, when using Staircase codes. In order to decode $A\mathbf{x}$ the master needs to receive a fraction equal to $(k-z)/(d-z)$ of the task assigned to each worker from any $d$ workers where $k\leq d \leq n$. The task completion time of the master can then be expressed as \cite{BPR17}
\begin{equation}\label{eq:sub}
T_{\text{SC}}(n,k,z)=\min_{d\in\{k,\dots,n\}}\left\{\dfrac{k-z}{d-z}T_{(d)}\right\}.
\end{equation}

\begin{theorem}
\label{th:gap_PRAC_staircase}
The gap between the completion time of PRAC and coded computation using staircase codes is lower bounded by:
\begin{align}
\mathbb{E}[T_{\text{SC}}]-\mathbb{E}\left[T_{PRAC}\right] \geq \frac{bx-\epsilon y}{y(x+y)},
\end{align}
where $x=\dfrac{n-d^*}{E[\beta_{t,n}]}$, $y=\dfrac{d^*-z}{E[\beta_{t,d^*}]}$ and $d^*$ is the value of $d$ that minimizes equation~\eqref{eq:sub}.
\end{theorem}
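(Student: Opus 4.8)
The plan is to reduce the claimed bound to a single algebraic inequality between two closed-form delay expressions and then establish that inequality by exploiting the ordering of the workers. First I would write both completion times in the approximate closed forms already available. By Theorem~\ref{thm:pracCompTime}, $\E[T_{PRAC}]\approx \frac{b+\epsilon}{S}$, where $S\triangleq\sum_{i=z+1}^n 1/\E[\beta_{t,i}]$. From the Staircase formula~\eqref{eq:sub} I would substitute the per-worker delay $\E[T_{(d)}]\approx \frac{b}{k-z}\E[\beta_{t,d}]$, which holds because each worker computes a task of $b/(k-z)$ blocks. The factor $k-z$ then cancels inside the minimization, leaving $\E[T_{\text{SC}}]\approx \min_{d\in\{k,\dots,n\}}\frac{b}{d-z}\E[\beta_{t,d}]$. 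Since $d^*$ is by definition the minimizing index and $y=\frac{d^*-z}{\E[\beta_{t,d^*}]}$, this collapses to the compact form $\E[T_{\text{SC}}]\approx b/y$.

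The observation that makes the whole bound fall out is the algebraic identity
\begin{equation*}
\frac{b}{y}-\frac{b+\epsilon}{x+y}=\frac{bx-\epsilon y}{y(x+y)},
\end{equation*}
which one verifies by clearing denominators. Hence it suffices to prove the two one-sided estimates $\E[T_{\text{SC}}]\ge b/y$ and $\E[T_{PRAC}]\le \frac{b+\epsilon}{x+y}$; the first is immediate from the previous paragraph. For the second, since $\E[T_{PRAC}]\approx (b+\epsilon)/S$, it is enough to lower bound $S$ by $x+y$.

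The heart of the argument is therefore the inequality $S\ge x+y$, and here I would use that the workers are indexed from fastest to slowest, so $1/\E[\beta_{t,i}]$ is nonincreasing in $i$. I split the sum at $d^*$:
\begin{equation*}
S=\sum_{i=z+1}^{d^*}\frac{1}{\E[\beta_{t,i}]}+\sum_{i=d^*+1}^{n}\frac{1}{\E[\beta_{t,i}]}.
\end{equation*}
In the first block there are $d^*-z$ terms, each at least $1/\E[\beta_{t,d^*}]$ because $\E[\beta_{t,i}]\le \E[\beta_{t,d^*}]$ for $i\le d^*$; this block is thus at least $\frac{d^*-z}{\E[\beta_{t,d^*}]}=y$. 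In the second block there are $n-d^*$ terms, each at least $1/\E[\beta_{t,n}]$ since worker $n$ is slowest; this block is at least $\frac{n-d^*}{\E[\beta_{t,n}]}=x$. Adding the two blocks gives $S\ge x+y$, hence $\E[T_{PRAC}]\le \frac{b+\epsilon}{x+y}$, and subtracting the two estimates yields exactly $\frac{bx-\epsilon y}{y(x+y)}$.

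The routine parts are the cancellation of $k-z$ and the algebraic identity; the one genuine step is the lower bound $S\ge x+y$, which rests entirely on the fastest-to-slowest indexing and the choice to break the sum at $d^*$. I expect the main subtlety to be justifying the passage from the random $T_{(d)}$ and the expectation of a minimum in~\eqref{eq:sub} to the deterministic surrogate $b/y$ at the level of a genuine inequality rather than an approximate equality, since interchanging the minimum with the expectation is only approximate; in keeping with the $\approx$ conventions of Theorem~\ref{thm:pracCompTime}, I would carry the bound through in the same approximate (mean-field) sense.
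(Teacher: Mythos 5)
Your proposal is correct and follows essentially the same route as the paper's proof in Appendix~E: both substitute the per-worker Staircase delay to get $\E[T_{\text{SC}}]=b/y$, lower-bound $\sum_{i=z+1}^{n}1/\E[\beta_{t,i}]$ by $x+y$ via the split at $d^*$ and the fastest-to-slowest ordering, and finish with the same algebraic simplification. Your reorganization into the identity $\frac{b}{y}-\frac{b+\epsilon}{x+y}=\frac{bx-\epsilon y}{y(x+y)}$ plus the single inequality $S\ge x+y$ is just a cleaner packaging of the paper's chain of steps, and your closing caveat about the $\approx$ conventions matches the level of rigor the paper itself adopts.
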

\begin{proof}
The proof of Theorem~\ref{th:gap_PRAC_staircase} is provided in Appendix~\ref{appen:proofTh4}.
\end {proof}


\begin{figure*}
\centering
 \setlength\figureheight{0.35\textwidth}
  \setlength\figurewidth{0.4\textwidth}
\begin{minipage}[b]{0.32\textwidth}
  \centering
\captionsetup{subtype,width=\textwidth}
\resizebox{0.85\textwidth}{!}{
%
%
\definecolor{mycolor1}{rgb}{0.0, 0.58, 0.71}
\definecolor{mycolor3}{rgb}{1.0, 0.65, 0.0}
\definecolor{mycolor4}{rgb}{0.53, 0.66, 0.42} 
\definecolor{mycolor2}{rgb}{0.8, 0.31, 0.36}
\begin{tikzpicture}
\def \msize{3}
\begin{axis}[%
width=0.951\figurewidth,
height=\figureheight,
at={(0\figurewidth,0\figureheight)},
scale only axis,
xmin=0,
xmax=1000,
xlabel={\Large Number of rows in A},
ymin=0,
ymax=32,
ylabel={\Large Average completion time},
axis background/.style={fill=white},
legend style={legend cell align=left,align=left,draw=white!15!black, at={(0.5,1)}}
]

\addplot [color=mycolor4,dashed,ultra thick,mark=+,mark options={solid}, mark size = \msize]
  table[row sep=crcr]{%
100	3.07398777527893\\
200	5.9766894033533\\
300	8.87452854633174\\
400	11.7473597273745\\
500	14.6789673632457\\
600	17.5520402477864\\
700	20.4667032220259\\
800	23.3956289228255\\
900	26.2825598706455\\
1000	29.256015236007\\
1100	32.1850152723158\\
1200	35.0577584657362\\
1300	38.0001151770413\\
1400	40.9451665828924\\
1500	43.8252297595605\\
1600	46.7227218538122\\
1700	49.6315796282632\\
1800	52.5240150756498\\
1900	55.4877006507744\\
2000	58.4077288294747\\
};
\addlegendentry{Staircase codes};

\addplot [color=mycolor3,dashed,ultra thick,mark=o,mark options={solid}, mark size = \msize]
  table[row sep=crcr]{%
100	2.97957420691925\\
200	5.37479685335549\\
300	7.902145887417\\
400	10.3515010577231\\
500	12.7675390786796\\
600	15.2078841199644\\
700	17.802173122435\\
800	20.2338982598432\\
900	22.735865825799\\
1000	25.243437566318\\
1100	27.6988095621937\\
1200	30.1823955477395\\
1300	32.5290395907913\\
1400	34.9840748569972\\
1500	37.6931083858944\\
1600	40.0034949848464\\
1700	42.5121370951853\\
1800	45.0400531680891\\
1900	47.5341001805296\\
2000	50.0730606250933\\
};
\addlegendentry{PRAC};

\addplot [color=mycolor2,solid,mark=+,mark options={solid}, ultra thick,mark size = \msize]
  table[row sep=crcr]{%
100	2.72729081205054\\
200	5.10059297161836\\
300	7.58823878687483\\
400	9.97604087778761\\
500	12.3237219644305\\
600	14.7107195448541\\
700	17.2529799372984\\
800	19.6054624913945\\
900	22.0461642439862\\
1000	24.498280411757\\
1100	26.9005303451615\\
1200	29.3047331916657\\
1300	31.6047838544385\\
1400	34.0067415088976\\
1500	36.6198970964271\\
1600	38.8977057865226\\
1700	41.3426709976467\\
1800	43.8306164730546\\
1900	46.2373218115397\\
2000	48.6988160503474\\
};
\addlegendentry{GC3P 1};

\addplot [color=mycolor2,solid,ultra thick,mark=o,mark options={solid},mark size = \msize]
  table[row sep=crcr]{%
100	1.29060453046032\\
200	2.45233192947778\\
300	3.59455182297876\\
400	4.73984506170949\\
500	5.87822656823061\\
600	7.02194852038843\\
700	8.20082943646749\\
800	9.33568720987128\\
900	10.4957251509212\\
1000	11.6410825789584\\
1100	12.7957223874547\\
1200	13.9633535939479\\
1300	15.0671299795819\\
1400	16.2445700081255\\
1500	17.3627840325308\\
1600	18.5252929463342\\
1700	19.6833023757855\\
1800	20.8422354256425\\
1900	22.0191367237434\\
2000	23.1446059070414\\
};
\addlegendentry{GC3P 2};

\addplot [color=mycolor1,dashed, ultra thick,mark=diamond,mark options={solid} , mark size = \msize]
  table[row sep=crcr]{%
100	1.26147444457214\\
200	2.32519853745949\\
300	3.42018673275339\\
400	4.4724163782977\\
500	5.51932365851534\\
600	6.61445851741387\\
700	7.68637679184898\\
800	8.76433213376233\\
900	9.84002707091013\\
1000	10.9249597607753\\
1100	11.9778081736427\\
1200	13.0594316013864\\
1300	14.0969062870362\\
1400	15.1840859059524\\
1500	16.3288951904394\\
1600	17.3105835426314\\
1700	18.3938110855208\\
1800	19.4955743730788\\
1900	20.6279582678162\\
2000	21.6708357445922\\
};
\addlegendentry{C3P};

\end{axis}
\end{tikzpicture}%
}
\caption{Scenario~1 with the fastest $13$ workers as eavesdropper for GC3P 1 and the slowest workers as eavesdropper for GC3P 2.}
\label{fig:thm1}
\end{minipage}\hfill %
\begin{minipage}[b]{0.32\textwidth}
\centering
\captionsetup{subtype,width=\textwidth}
\resizebox{0.9\textwidth}{!}{
%
%
\definecolor{mycolor1}{rgb}{0.0, 0.58, 0.71}
\definecolor{mycolor3}{rgb}{1.0, 0.65, 0.0}
\definecolor{mycolor4}{rgb}{0.53, 0.66, 0.42} 
\definecolor{mycolor2}{rgb}{0.8, 0.31, 0.36}
\begin{tikzpicture}
\def \msize{3}
\begin{axis}[%
width=0.951\figurewidth,
height=\figureheight,
at={(0\figurewidth,0\figureheight)},
scale only axis,
xmin=0,
xmax=1000,
xlabel={\Large Number of rows in A},
ymin=0,
ymax=35,
ylabel={\Large Average completion time},
axis background/.style={fill=white},
legend style={legend cell align=left,align=left,draw=white!15!black, at={(0.5,1)}}
]

\addplot [color=mycolor4,dashed,ultra thick,mark=+,mark options={solid}, mark size = \msize]
  table[row sep=crcr]{%
100	3.65062417283511\\
200	7.12908941481\\
300	10.5984003911704\\
400	13.9904653048975\\
500	17.4628866447763\\
600	20.9551095805497\\
700	24.3488011602751\\
800	27.8872414482911\\
900	31.3727063533255\\
1000	34.8828948372145\\
};
\addlegendentry{Staircase codes};

\addplot [color=mycolor3,dashed,ultra thick,mark=o,mark options={solid}, mark size = \msize]
  table[row sep=crcr]{%
100	3.05689312208419\\
200	4.99049084018271\\
300	7.23081593612045\\
400	9.42374041894448\\
500	11.6424570330478\\
600	13.9679346953894\\
700	16.2021284596501\\
800	18.3829424271816\\
900	20.610755243029\\
1000	22.8889556363238\\
};
\addlegendentry{PRAC};

\addplot [color=mycolor2,solid,mark=+,mark options={solid}, ultra thick,mark size = \msize]
  table[row sep=crcr]{%
100	1.51646226073011\\
200	2.85451522631444\\
300	4.22307554584364\\
400	5.53398716041868\\
500	6.87065440674264\\
600	8.31514878965861\\
700	9.66122209532708\\
800	10.9997090515533\\
900	12.1569628261343\\
1000	13.7938347814301\\
};
\addlegendentry{GC3P};

\addplot [color=mycolor1,dashed, ultra thick,mark=diamond,mark options={solid} , mark size = \msize]
  table[row sep=crcr]{%
100	1.16994657804265\\
200	2.14517683973822\\
300	3.15419863321281\\
400	4.12947494468375\\
500	5.12959410275999\\
600	6.11487067594198\\
700	7.1276803727168\\
800	8.11308761896824\\
900	9.07783270876941\\
1000	10.1000448504055\\
};
\addlegendentry{C3P};

\end{axis}
\end{tikzpicture}%
}
\caption{Scenario~2 with $13$ workers picked at random to be eavesdroppers.}
\label{fig:thm2}
\end{minipage}\hfill %
\begin{minipage}[b]{0.32\textwidth}
  \centering
\captionsetup{subtype,width=\textwidth}
\resizebox{0.9\textwidth}{!}{
%
%
\definecolor{mycolor1}{rgb}{0.0, 0.58, 0.71}
\definecolor{mycolor3}{rgb}{1.0, 0.65, 0.0}
\definecolor{mycolor4}{rgb}{0.53, 0.66, 0.42} 
\definecolor{mycolor2}{rgb}{0.8, 0.31, 0.36}
\begin{tikzpicture}
\def\msize{3}
\begin{axis}[%
width=0.951\figurewidth,
height=\figureheight,
at={(0\figurewidth,0\figureheight)},
scale only axis,
xmin=0,
xmax=1000,
xlabel={\Large Number of rows in A},
ymin=0,
ymax=27,
ylabel={\Large Average completion time},
axis background/.style={fill=white},
legend style={legend cell align=left,align=left,draw=white!15!black, at={(0.5,1)}}
]

\addplot [color=mycolor4,dashed,ultra thick,mark=+,mark options={solid}, mark size = \msize]
  table[row sep=crcr]{%
100	2.7542725108681\\
200	5.42102600399259\\
300	8.2162363191682\\
400	10.6654119869774\\
500	13.9068019206387\\
600	16.4945035291028\\
700	19.3554064522527\\
800	22.1313778506127\\
900	25.2608239080971\\
1000	26.6395360741626\\
};
\addlegendentry{Staircase codes};

\addplot [color=mycolor3,dashed,ultra thick,mark=o,mark options={solid}, mark size = \msize]
  table[row sep=crcr]{%
100	2.20568371519631\\
200	3.51543174496357\\
300	5.01728761048299\\
400	6.50699777863379\\
500	8.22561901210747\\
600	9.75882471440259\\
700	11.3424244083203\\
800	12.990014880293\\
900	14.7226400212119\\
1000	15.6649018109494\\
};
\addlegendentry{PRAC};
\addplot [color=mycolor2,solid,mark=+,mark options={solid}, ultra thick,mark size = \msize]
  table[row sep=crcr]{%
100	1.37061891017534\\
200	2.54509876876894\\
300	3.72979970207008\\
400	5.00524677084378\\
500	6.2583756676991\\
600	7.37558849940755\\
700	8.65629484501795\\
800	9.89883098417352\\
900	11.1844759694921\\
1000	11.9902536733535\\
};
\addlegendentry{GC3P};

\addplot [color=mycolor1,dashed, ultra thick,mark=diamond,mark options={solid} , mark size = \msize]
  table[row sep=crcr]{%
100	1.05671192122218\\
200	1.92639042002182\\
300	2.82273296688827\\
400	3.70412734649402\\
500	4.64743145979225\\
600	5.52203159602494\\
700	6.40222487874069\\
800	7.34875423183624\\
900	8.27035605962258\\
1000	8.94993906867189\\
};
\addlegendentry{C3P};

\end{axis}
\end{tikzpicture}%
}
\caption{Scenario~3 with $13$ workers picked at random to be eavesdroppers.}
\label{fig:thm3}
\end{minipage}
\caption{Comparison between PRAC and the baselines Staircase codes, GC3P, and C3P in different scenarios with $n=50$ workers and $z=13$ colluding eavesdroppers for different values of the number of rows $m$. For each value of $m$ we run $100$ experiments and average the results. When the eavesdropper are chosen to be the fastest workers, PRAC has very similar performance to GC3P. When the eavesdroppers are picked randomly, the performance of PRAC becomes closer to this of GC3P when the non adversarial workers are more heterogeneous.}
\label{fig:remove}
\end{figure*}


Theorem \ref{th:gap_PRAC_staircase} shows that the lower bound on the gap between secure coded computation using staircase codes and PRAC is in the order of number of row blocks of $A$. Hence, the gap between secure coded computation using Staircase codes and PRAC is linearly increasing with the number of row blocks of $A$. Note that, $\epsilon$, the required overhead by fountain coding used in PRAC, becomes negligible as $b$ increases.

Thus, PRAC outperforms secure coded computation using Staircase codes in heterogeneous systems. The more heterogeneous the workers are, the more improvement is obtained by using PRAC. However, Staircase codes can slightly outperform PRAC in the case where the slowest $n-z$ workers are homogeneous, \ie have similar compute service times $T_i$. In this case both algorithms are restricted to the slowest $n-z$ workers (see Lemma~\ref{lemma:z+1}), but PRAC incurs an $\epsilon$ overhead of tasks (due to using Fountain codes) which is not needed for Staircase codes. In particular, from \eqref{eq:T_{PRAC}} and \eqref{eq:sub}, when the $n-z$ slowest workers are homogeneous, the task completion time of PRAC and Staircase codes are equal to $\frac{b+\epsilon}{n-z} \E[\beta_{t,n}]$ and $\frac{b}{n-z} \E[\beta_{t,n}]$, respectively.

\section{Performance Evaluation} \label{sec:exp}
\subsection{Simulations}
In this section, we present simulations run on MATLAB, and compare PRAC with the following baselines: (i) Staircase codes \cite{BPR17}, (ii) C3P \cite{KS18} (which is not secure as it is not designed to be secure), and (iii) Genie C3P (GC3P) that extends C3P by assuming a knowledge of the identity of the eavesdroppers and ignoring them. We note that GC3P serves as a lower bound on private coded computing schemes for heterogeneous systems\footnote{If the system is homogeneous Staircase codes outperform GC3P, because pre-allocating tasks to the workers avoids the overhead needed by Fountain codes.} for the following reason: for a given number of $z$ colluding workers the ideal coded computing scheme knows which workers are eavesdroppers and ignores them to use the remaining workers without need of randomness. If the identity of the colluding workers is unknown, coded computing schemes require randomness and become limited to the $(z+1)^\text{st}$ fastest worker (Lemma~\ref{lemma:z+1}). GC3P and other coded computing schemes have similar performance if the $z$ colluding workers are the fastest workers. If the $z$ colluding workers are the slowest, then GC3P outperforms any coded computing scheme. Note that our solution PRAC considers the scenario of unknown eavesdroppers. Comparing PRAC with G3CP shows how good PRAC is as compared to the best possible solution for heterogeneous systems. In terms of comparing PRAC to solutions designed for the homogeneous setting, we restrict our attention to Staircase codes which are a class of secret sharing schemes that enjoys a flexibility in the number of workers needed to decode the matrix-vector multiplication. Staircase codes are shown to outperform any coded computing scheme that requires a threshold on the number of stragglers \cite{BPR17}.

\begin{figure*}[t]
\centering
 \setlength\figureheight{0.35\textwidth}
  \setlength\figurewidth{0.4\textwidth}
\begin{minipage}[b]{0.45\textwidth}
\centering
\captionsetup{subtype,width=\textwidth}
\resizebox{0.8\textwidth}{!}{
%
%
\definecolor{mycolor1}{rgb}{0.0, 0.58, 0.71}
\definecolor{mycolor3}{rgb}{1.0, 0.65, 0.0}
\definecolor{mycolor4}{rgb}{0.53, 0.66, 0.42} 
\definecolor{mycolor2}{rgb}{0.8, 0.31, 0.36}
\begin{tikzpicture}
\def \msize{3}
\begin{axis}[%
width=0.951\figurewidth,
height=\figureheight,
at={(0\figurewidth,0\figureheight)},
scale only axis,
xmin=10,
xmax=100,
xlabel={\Large Number of workers $n$},
ymin=0,
ymax=300,
ylabel={\Large Average completion time},
axis background/.style={fill=white},
legend style={legend cell align=left,align=left,draw=white!15!black}
]

\addplot [color=mycolor4,dashed,ultra thick,mark=+,mark options={solid}, mark size = \msize]
  table[row sep=crcr]{%
10	170.677050269582\\
20	68.9574108890923\\
30	49.5084691842434\\
40	34.9075275339232\\
50	29.243390261425\\
60	23.5316711066132\\
70	20.8916070054565\\
80	17.811855117509\\
90	16.1808864375402\\
100	14.2833209957022\\
};
\addlegendentry{Staircase codes};

\addplot [color=mycolor3,dashed,ultra thick,mark=o,mark options={solid}, mark size = \msize]
  table[row sep=crcr]{%
10	141.495725512529\\
20	60.33855519074\\
30	42.532494736894\\
40	30.3235143578628\\
50	25.1482853014765\\
60	20.3641547025128\\
70	17.9541819199496\\
80	15.3399688959021\\
90	13.9645830738505\\
100	12.328047010354\\
};
\addlegendentry{PRAC};

\addplot [color=mycolor2,solid,mark=+,mark options={solid}, ultra thick,mark size = \msize]
  table[row sep=crcr]{%
10	123.974345571857\\
20	60.2854318224441\\
30	40.5806420400909\\
40	30.2921877129104\\
50	24.3987502325139\\
60	20.3451717188176\\
70	17.54348071153\\
80	15.3254989750666\\
90	13.6925900881264\\
100	12.3163622628771\\
};
\addlegendentry{GC3P};

\addplot [color=mycolor1,dashed, ultra thick,mark=diamond,mark options={solid} , mark size = \msize]
  table[row sep=crcr]{%
10	58.6549058352498\\
20	26.4386035796407\\
30	18.264572444802\\
40	13.3424547609769\\
50	10.8982644493225\\
60	8.94411631385786\\
70	7.80911737928998\\
80	6.74966360995802\\
90	6.09712336298765\\
100	5.43861822022874\\
};
\addlegendentry{C3P};

\end{axis}
\end{tikzpicture}%
}
\caption{Task completion time as a function of the number of workers with $z=n/4$.}
\label{fig:thm2}
\end{minipage}\hfill %
\begin{minipage}[b]{0.45\textwidth}
  \centering
\captionsetup{subtype,width=\textwidth}
\resizebox{0.8\textwidth}{!}{
%
%
\definecolor{mycolor1}{rgb}{0.0, 0.58, 0.71}
\definecolor{mycolor3}{rgb}{1.0, 0.65, 0.0}
\definecolor{mycolor4}{rgb}{0.53, 0.66, 0.42} 
\definecolor{mycolor2}{rgb}{0.8, 0.31, 0.36}
\begin{tikzpicture}
\def \msize{3}
\begin{axis}[%
width=0.951\figurewidth,
height=\figureheight,
at={(0\figurewidth,0\figureheight)},
scale only axis,
xmin=20,
xmax=100,
xlabel={\Large Number of workers $n$},
ymin=0,
ymax=300,
ylabel={\Large Average completion time},
axis background/.style={fill=white},
legend style={legend cell align=left,align=left,draw=white!15!black}
]

\addplot [color=mycolor4,dashed,ultra thick,mark=+,mark options={solid}, mark size = \msize]
  table[row sep=crcr]{%
20	291.605753680434\\
30	76.6985423437678\\
40	40.9327141715677\\
50	29.2349015081406\\
60	22.0982404988382\\
70	18.2053183723176\\
80	15.1703561124333\\
90	13.2339099062179\\
100	11.5837097489838\\
};
\addlegendentry{Staircase codes};

\addplot [color=mycolor3,dashed,ultra thick,mark=o,mark options={solid}, mark size = \msize]
  table[row sep=crcr]{%
20	195.678902923971\\
30	61.9088424583411\\
40	35.3694700517878\\
50	25.1929314947661\\
60	17.5967700889406\\
70	13.7354194710647\\
80	10.8674947787487\\
90	9.28392739994736\\
100	7.86201314885801\\
};
\addlegendentry{PRAC};

\addplot [color=mycolor2,solid,mark=+,mark options={solid}, ultra thick,mark size = \msize]
  table[row sep=crcr]{%
20	100.185354668941\\
30	44.8924934187705\\
40	31.7185222018174\\
50	24.4521627613519\\
60	17.3574883377777\\
70	13.4796639009596\\
80	10.6153788440569\\
90	9.04229229832404\\
100	7.6410587612594\\
};
\addlegendentry{GC3P};

\addplot [color=mycolor1,dashed, ultra thick,mark=diamond,mark options={solid} , mark size = \msize]
  table[row sep=crcr]{%
20	26.4580681895177\\
30	18.3235408606476\\
40	13.3279559204105\\
50	10.9141854379201\\
60	8.92458131855267\\
70	7.80476539355301\\
80	6.75122735102084\\
90	6.09344651814085\\
100	5.42879727673192\\
};
\addlegendentry{C3P};

\end{axis}
\end{tikzpicture}%
}
\caption{Task completion time as a function of the number of workers with $z=13$.}
\label{fig:thm3}
\end{minipage}
\caption{Comparison between PRAC, Staircase codes and GC3P in scenario~1 for different values of the number workers and number of colluding workers. We fix the number of rows to $m=1000$. For each value of the $x$-axis we run $100$ experiments and average the results. We observe that the difference between the completion time of PRAC and this of GC3P is large for small values of $n-z$ and decreases with the increase of $n-z$.}
\label{fig:mix}
\end{figure*}

\begin{figure*}[h]
\centering
 \setlength\figureheight{0.35\textwidth}
  \setlength\figurewidth{0.4\textwidth}
\begin{minipage}[b]{0.45\textwidth}
  \centering
\captionsetup{subtype,width=\textwidth}
\resizebox{0.8\textwidth}{!}{
%
%
\definecolor{mycolor1}{rgb}{0.0, 0.58, 0.71}
\definecolor{mycolor3}{rgb}{1.0, 0.65, 0.0}
\definecolor{mycolor4}{rgb}{0.53, 0.66, 0.42} 
\definecolor{mycolor2}{rgb}{0.8, 0.31, 0.36}
\begin{tikzpicture}
\def \msize{2}
\begin{axis}[%
width=0.951\figurewidth,
height=\figureheight,
at={(0\figurewidth,0\figureheight)},
scale only axis,
xmin=0,
xmax=40,
xlabel={\Large Number of colluding workers $z$},
ymin=0,
ymax=250,
ylabel={\Large Average completion time},
axis background/.style={fill=white},
legend style={legend cell align=left,align=left,draw=white!15!black, at={(0.5,1)}}
]

\addplot [color=mycolor4,dashed,ultra thick,mark=+,mark options={solid}, mark size = \msize]
  table[row sep=crcr]{%
1	19.6119583831531\\
2	20.2071936020707\\
3	20.7649469611366\\
4	21.3839556206288\\
5	22.0920147631752\\
6	22.713058534048\\
7	23.5104989042457\\
8	24.2987016215011\\
9	25.1511058701219\\
10	26.049504796285\\
11	27.0334089636711\\
12	28.1147222549489\\
13	29.2149190240437\\
14	30.5027128285516\\
15	31.8562393985089\\
16	33.3247196047328\\
17	34.9317518019799\\
18	36.7546569816453\\
19	38.8196985978089\\
20	41.0032723855897\\
21	43.5447612869781\\
22	46.4244535549435\\
23	49.6910386106366\\
24	53.5411642262491\\
25	57.9152042557586\\
26	63.1041798139344\\
27	69.268796570276\\
28	76.9126775443301\\
29	86.5873356523736\\
30	98.9421319582002\\
31	108.635055472889\\
32	114.691839645638\\
33	121.293717161376\\
34	128.919572235846\\
35	137.354568892951\\
36	147.304837342042\\
37	158.470812230969\\
38	171.567616358151\\
39	187.339166990817\\
40	206.134914696513\\
};
\addlegendentry{Staircase codes};

\addplot [color=mycolor3,dashed,ultra thick,mark=o,mark options={solid}, mark size = \msize]
  table[row sep=crcr]{%
1	11.5123440416478\\
2	12.134067671669\\
3	12.7924361185052\\
4	13.5487081488437\\
5	14.3619049359516\\
6	15.322786459536\\
7	16.35563384221\\
8	17.5143485198141\\
9	18.7970829832383\\
10	20.3422104304155\\
11	22.208157496341\\
12	24.1779103155688\\
13	25.1136099323211\\
14	26.220011391931\\
15	27.3379183797101\\
16	28.5087226837669\\
17	29.8233596365006\\
18	31.1597611068351\\
19	32.4891520674162\\
20	34.205291876436\\
21	35.9222770232232\\
22	37.8486734557468\\
23	39.9820060281678\\
24	42.3236638490017\\
25	44.9842366132606\\
26	47.7930034683581\\
27	51.2192327709451\\
28	54.9894072512652\\
29	59.4095119440541\\
30	64.5078778195951\\
31	70.9086397653581\\
32	78.2296758690923\\
33	86.9877392143233\\
34	98.7542996945167\\
35	113.838488438833\\
36	133.78595124374\\
37	162.503099442179\\
38	177.237368408903\\
39	194.376276334002\\
40	214.435166982401\\
};
\addlegendentry{PRAC};

\addplot [color=mycolor2,solid,mark=+,mark options={solid}, ultra thick,mark size = \msize]
  table[row sep=crcr]{%
1	11.4380804309719\\
2	12.0083870039201\\
3	12.6207921318584\\
4	13.335333845487\\
5	14.1171691702393\\
6	15.024719826986\\
7	16.0374986831442\\
8	17.1961189634434\\
9	18.5087215507409\\
10	20.1015517919586\\
11	22.0261074458003\\
12	24.1557469480789\\
13	24.3815180543015\\
14	24.7096658978276\\
15	25.025781097577\\
16	25.3103899207915\\
17	25.6482993595174\\
18	25.9095405008917\\
19	26.1270387584603\\
20	26.5175088662326\\
21	26.8573469161454\\
22	27.1692506509859\\
23	27.5483272683636\\
24	27.8694234293205\\
25	28.3124727915325\\
26	29.3880662218185\\
27	30.7442062248794\\
28	32.0573824092752\\
29	33.6011641290935\\
30	35.2289665153075\\
31	37.0111988452392\\
32	39.1770127705424\\
33	41.3525248006505\\
34	43.987437059451\\
35	46.9788747086384\\
36	50.1329893542992\\
37	53.9896774722919\\
38	58.5708975224675\\
39	63.8473095193331\\
40	70.2911306700171\\
};
\addlegendentry{GC3P};

\addplot [color=mycolor1,dashed, ultra thick,mark=diamond,mark options={solid} , mark size = \msize]
  table[row sep=crcr]{%
1	10.9149665232018\\
2	10.914770469746\\
3	10.8938378940697\\
4	10.9108462086372\\
5	10.8944478688514\\
6	10.9243501947861\\
7	10.9236951063585\\
8	10.9322160787192\\
9	10.9034146921505\\
10	10.8946300381667\\
11	10.9232326861857\\
12	10.8849305497382\\
13	10.8888763667838\\
14	10.9042624796751\\
15	10.8927818382706\\
16	10.9273721063881\\
17	10.911363641579\\
18	10.8921301807534\\
19	10.8698418526278\\
20	10.9224639797535\\
21	10.9077996250002\\
22	10.9272419133915\\
23	10.9134044412759\\
24	10.9222166024485\\
25	10.9221812829894\\
26	10.8968030891548\\
27	10.913147556179\\
28	10.8974165176403\\
29	10.8987477712447\\
30	10.8773126426725\\
31	10.9013128726641\\
32	10.8915028859702\\
33	10.9133736398001\\
34	10.909387846508\\
35	10.9071523536621\\
36	10.9247114141559\\
37	10.8918285313873\\
38	10.9045963154831\\
39	10.8975802946067\\
40	10.9043485684963\\
};
\addlegendentry{C3P};

\end{axis}
\end{tikzpicture}%
}
\caption{Task completion time as a function of the number of colluding workers for $n=50$. Computing time of the workers are chosen according to scenario~1.}
\label{fig:thm1}
\end{minipage}\hfill %
\begin{minipage}[b]{0.45\textwidth}
\centering
\captionsetup{subtype,width=\textwidth}
\resizebox{0.8\textwidth}{!}{
%
%
\definecolor{mycolor1}{rgb}{0.0, 0.58, 0.71}
\definecolor{mycolor3}{rgb}{1.0, 0.65, 0.0}
\definecolor{mycolor4}{rgb}{0.53, 0.66, 0.42} 
\definecolor{mycolor2}{rgb}{0.8, 0.31, 0.36}
\begin{tikzpicture}
\def \msize{2}
\begin{axis}[%
width=0.951\figurewidth,
height=\figureheight,
at={(0\figurewidth,0\figureheight)},
scale only axis,
xmin=0,
xmax=40,
xlabel={\Large Number of colluding workers $z$},
ymin=0,
ymax=250,
ylabel={\Large Average completion time},
axis background/.style={fill=white},
legend style={legend cell align=left,align=left,draw=white!15!black, at={(0.5,1)}}
]

\addplot [color=mycolor3,dashed,ultra thick,mark=o,mark options={solid}, mark size = \msize]
  table[row sep=crcr]{%
1	43.661622332498\\
2	44.4858195881264\\
3	45.6033763706745\\
4	46.4043688168998\\
5	47.343982946331\\
6	48.4250993838749\\
7	49.713649643721\\
8	50.7911541967181\\
9	52.0071015402494\\
10	53.1999237384814\\
11	54.5230501222176\\
12	56.0690859827673\\
13	57.6022622745391\\
14	59.1433390822789\\
15	60.8441803275899\\
16	62.5154007255706\\
17	64.3779602776508\\
18	66.4464059532054\\
19	68.5889878490111\\
20	70.7219451861782\\
21	73.1576700411972\\
22	75.8861932108574\\
23	78.581143828442\\
24	81.6618790595288\\
25	85.0307758168076\\
26	88.3621667235451\\
27	92.1587862216923\\
28	96.0823987859343\\
29	100.916806703864\\
30	106.059884734502\\
31	111.453892066583\\
32	117.564502568793\\
33	124.082957524222\\
34	131.823858077595\\
35	140.627901706189\\
36	150.838547866979\\
37	162.583236594476\\
38	176.230597199247\\
39	191.678906671738\\
40	210.552008461574\\
};
\addlegendentry{PRAC};

\addplot [color=mycolor4,dashed,ultra thick,mark=+,mark options={solid}, mark size = \msize]
  table[row sep=crcr]{%
1	42.8891592277575\\
2	43.760973538901\\
3	44.6701159839236\\
4	45.5706155269428\\
5	46.6504575813439\\
6	47.6229556274356\\
7	48.7673326120641\\
8	49.8312436552887\\
9	51.0740482158038\\
10	52.3018960786455\\
11	53.6384488468755\\
12	54.9431418821782\\
13	56.4690865875565\\
14	57.9325597240736\\
15	59.5772829158251\\
16	61.3905693249365\\
17	63.2108100633284\\
18	65.0929315583829\\
19	67.15907521029\\
20	69.4506825586277\\
21	71.832694344417\\
22	74.2722295037711\\
23	77.0059884437662\\
24	79.9424077891538\\
25	83.1643485890522\\
26	86.595947677293\\
27	90.1384651115445\\
28	94.1942668490218\\
29	98.6554887336575\\
30	103.525290562122\\
31	108.987265729367\\
32	114.777916944716\\
33	121.808930101952\\
34	129.022723978235\\
35	137.682957719801\\
36	147.735480948989\\
37	158.370920704099\\
38	171.40375587708\\
39	186.885129260243\\
40	205.169403451901\\
};
\addlegendentry{Staircase codes};


\addplot [color=mycolor2,solid,mark=+,mark options={solid}, ultra thick,mark size = \msize]
  table[row sep=crcr]{%
1	43.6204147347081\\
2	44.4474680420105\\
3	45.5570473259467\\
4	46.3658572552558\\
5	47.3034263884088\\
6	48.3738032774749\\
7	49.6614669275534\\
8	50.7460353038849\\
9	51.9618343957285\\
10	53.1540581020171\\
11	54.4738428447677\\
12	56.0190889602264\\
13	57.5523715803367\\
14	59.0858910394278\\
15	60.7914110883362\\
16	62.4618712810789\\
17	64.3066441579972\\
18	66.3907717260393\\
19	68.5042099377229\\
20	70.6463101439186\\
21	73.0930867605701\\
22	75.824045391208\\
23	78.5091628063399\\
24	81.5893505609526\\
25	84.9539384549053\\
26	88.2762946405023\\
27	92.06953284083\\
28	95.9800654741611\\
29	100.824691784572\\
30	105.952960902446\\
31	111.330877012838\\
32	117.460202218843\\
33	123.964308199931\\
34	131.699559343877\\
35	140.479191892306\\
36	150.681616604802\\
37	162.442172786131\\
38	176.075705452059\\
39	191.492667073899\\
40	210.348536485892\\
};
\addlegendentry{GC3P};

\addplot [color=mycolor1,dashed, ultra thick,mark=diamond,mark options={solid} , mark size = \msize]
  table[row sep=crcr]{%
1	41.1465920576876\\
2	35.6284343025092\\
3	34.5902337787458\\
4	30.5509678360195\\
5	29.6150143346085\\
6	26.6405589853927\\
7	26.1141925179152\\
8	23.7452436846346\\
9	23.2048539494944\\
10	21.3542705111992\\
11	20.978158835216\\
12	19.4375090889155\\
13	19.1139514722956\\
14	17.7960892832344\\
15	17.5466446297045\\
16	16.4179277522727\\
17	16.1811849770095\\
18	15.278496779641\\
19	15.0807958672124\\
20	14.2575591839599\\
21	14.0642237611258\\
22	13.3456432982461\\
23	13.1727594681083\\
24	12.6102281721487\\
25	12.4245157616402\\
26	11.8763218635243\\
27	11.7274846114474\\
28	11.2405105975276\\
29	11.1521967923815\\
30	10.7194378513984\\
31	10.6008625411038\\
32	10.1515236522223\\
33	10.1068972371975\\
34	9.69846410928152\\
35	9.61313542872636\\
36	9.28105803851345\\
37	9.20472615796884\\
38	8.91325375535446\\
39	8.82084005723514\\
40	8.56132787418166\\
};
\addlegendentry{C3P};

\end{axis}
\end{tikzpicture}%
}
\caption{ Task completion time for $n=50$ workers and variable $z$. Computing times of the workers are chosen such that the $n-z$ slowest workers are homogeneous.}
\label{fig:thm2}
\end{minipage}\hfill %
\caption{Comparison between PRAC and Staircase codes average completion time as a function of number of colluding workers $z$. We fix the number of rows to $m=1000$. Both codes are affected by the increase of number of colluding helpers because their runtime is restricted to the slowest $n-z$ workers. We observe that PRAC outperforms Staircase codes except when the $n-z$ slowest workers are homogeneous.}
\label{fig:collusions}
\end{figure*}

In our simulations, we model the computation time of each worker $w_i$ by an independent shifted exponential random variable with rate $\lambda_i$ and shift $c_i$, \ie $F(T_i=t) = 1-\exp(-\lambda_i(t-c_i))$. We take $c_i=1/\lambda_i$ and consider three different scenarios for choosing the values of $\lambda_i$'s for the workers as follows:
\begin{itemize}
\item {\em Scenario~1}: we assign $\lambda_i = 3$ for half of the workers, then we assign $\lambda_i = 1$ for one quarter of the workers and assign $\lambda_i=9$ for the remaining workers.

\item {\em Scenario~2}: we assign $\lambda_i = 1$ for one third of the workers, the second third have $\lambda_i = 3$ and the remaining workers have $\lambda_i = 9$.

\item {\em Scenario~3}: we draw the $\lambda_i$'s independently and uniformly at random from the interval $[0.5,9]$. 
\end{itemize}

When running Staircase codes, we choose the parameter $k$ that minimizes the task completion time for the desired $n$ and $z$. We do so by simulating Staircase codes for all possible values of $k$, $z\leq k\leq n$, and choose the one with the minimum completion time.

We take $b=m$, \ie each row block is simply a row of $A$. The size of each element of $A$ and vector $\mathbf{x}$ are assumed to be 1 Byte (or 8 bits). Therefore, the size of each transmitted packet $p_{t,i}$ is $8*\ell$ bits. For the simulation results, we assume that the matrix $A$ is a square matrix, \ie $l=m$. We take $m=1000$, unless explicitly stated otherwise. $C_i$ denotes the average channel capacity of each worker $w_i$ and is selected uniformly from the interval $[10, 20]$ Mbps. The rate of sending a packet to worker $w_i$ is sampled from a Poisson distribution with mean $C_i$.  

In Figure~\ref{fig:remove} we show the effect of the number of rows $m$ on the completion time at the master. We fix the number of workers to $50$ and the number of colluding workers to $13$ and plot the completion time for PRAC, C3P, GC3P and Staircase codes. Notice that PRAC and Staircase codes have close completion time in scenario~1 (Figure~\ref{fig:thm1}) and this completion time is far from that of C3P. The reason is that in this scenario we pick exactly $13$ workers to be fast ($\lambda_i=9$) and the others to be significantly slower. Since PRAC assigns keys to the fastest $z$ workers, the completion time is dictated by the slow workers. To compare PRAC with Staircase codes notice that the majority of the remaining workers have $\lambda_i=3$ therefore pre-allocating equal tasks to the workers is close to adaptively allocating the tasks.

In terms of lower bound on PRAC, observe that when the fastest workers are assumed to be adversarial,  GC3P and PRAC have very similar task completion time. However, when the slowest workers are assumed to be adversarial the completion of GC3P is very close to C3P and far from PRAC. This observation is in accordance with Lemma~2. In scenarios~2 and 3 we pick the adversarial workers uniformly at random and observe that the completion time of PRAC becomes closer to GC3P when the workers are more heterogeneous. For instance, in scenario~3, GC3P and PRAC have closer performance when the workers' computing times are chosen uniformly at random from the interval $[0.5,9]$.


In Figure~\ref{fig:mix}, we plot the task completion time as a function of the number of workers $n$ for a fixed number of rows $m=1000$ and $\lambda_i$'s assigned according to scenario~1. In Figure~\ref{fig:mix}(a), we change the number of workers from $10$ to $100$ and keep the ratio $z/n = 1/4$ fixed. We notice that with the increase of $n$ the completion time of PRAC becomes closer to GC3P. In Figure~\ref{fig:mix}(b), we change the number of workers from $20$ to $100$ and keep $z = 13$ fixed. We notice that with the increase of $n$, the effect of the eavesdropper is amortized and the completion time of PRAC becomes closer to C3P. In this setting, PRAC always outperforms Staircase codes.

In Figure~\ref{fig:collusions}, we plot the task completion time as a function of the number of colluding workers. In Figure~\ref{fig:collusions}(a), we choose the computing time at the workers according to scenario~1. We change $z$ from $1$ to $40$ and observe that the completion time of PRAC deviates from that of GC3P with the increase of $z$. More importantly, we observe two inflection points of the average completion time of PRAC at $z=13$ and $z=37$. Those inflection points are due to the fact that we have $12$ fast workers ($\lambda = 9$) and $25$ workers with medium speed ($\lambda = 3$) in the system. For $z>36$, the completion time of Staircase codes becomes less than that of PRAC because the $14$ slowest workers are homogeneous. Therefore, pre-allocating the tasks is better than using Fountain codes and paying for the overhead of computations. To confirm that Staircase codes always outperforms PRAC when the slowest $n-z$ workers are homogeneous, we run a simulation in which we divide the workers into three clusters. The first cluster consists of $\lfloor z/2\rfloor$ fast workers ($\lambda = 9$), the second consists of $\lfloor z/2 \rfloor +1$ workers that are regular ($\lambda = 3$) and the remaining $n-z$ workers are slow ($\lambda = 1$). In Figure~\ref{fig:collusions}(b) we fix $n$ to $50$ and change $z$ from $1$ to $40$. We observe that Staircase codes always outperform PRAC in this setting. In contrast to non secure C3P, Staircase codes and PRAC are always restricted to the slowest $n-z$ workers and cannot leverage the increase of the number of fast workers. For GC3P, we assume that the fastest workers are eavesdroppers. We note that as expected from Lemma~\ref{lemma:z+1}, when the fastest workers are assumed to be eavesdroppers the performance of GC3P and PRAC becomes very close. 

\subsection{Experiments} 

{\em Setup.} The master device is a Nexus 5 Android-based smartphone running 6.0.1. The worker devices are Nexus 6Ps running Android 8.1.0. The master device connects to worker devices via Wi-Fi Direct links and the master is the group owner of Wi-Fi Direct group. The master device is required to complete one matrix multiplication ($y=Ax$) where $A$ is of dimensions $60 \times 10000$ and $x$ is a $10000\times 1$ vector. We also take $m=b$ \ie each packet is a row of $A$. We introduced an artificial delay at the workers following an exponential distribution. The introduced delays serves to emulate applications running in the background of the devices. A worker device sends the result to the master after it is done calculating and the introduced delay has passed. Furthermore, we assume that $z=1$ \ie there is one unknown worker that is adversarial among all the workers. The experiments are conducted in a lab environment where there are other Wi-Fi networks operating in the background.

{\em Baselines.} Our PRAC algorithm is compared to three baseline algorithms: (i)
Staircase codes that preallocate the tasks based on $n$, the number of workers, $k$, the minimum number of workers required to reconstruct the information, and $z$, the number of colluding workers;
(ii) GC3P in which we assume the adversarial worker is known and excluded during the task allocation;
(iii) Non secure C3P in which the security problem is ignored and the master device will utilize every resource without randomness. In this setup we run C3P on $n-z$ workers.

{\em Results.} 
Figure~\ref{exp_homogeneous} presents the task completion time with increasing number of workers for the homogeneous setup, \ie when all the workers have similar computing times.  
Computing delay for each packet 
follows an exponential distribution with mean $\mu = 1/\lambda = 3$ seconds in all workers. C3P performs the best in terms of completion time, but C3P does not provide any privacy guarantees.  PRAC outperforms Staircase codes when the number of workers is 5. 
The reason is that PRAC performs better than Staircase codes in heterogeneous setup, and when the number of workers increases, the system becomes a bit more heterogeneous.   
%
GC3P significantly outperforms PRAC in terms of completion time. Yet, it requires a prior knowledge of which worker is adversarial, which is often not available in real world scenarios.

\begin{figure}
\centering
\includegraphics[width=0.4\textwidth]{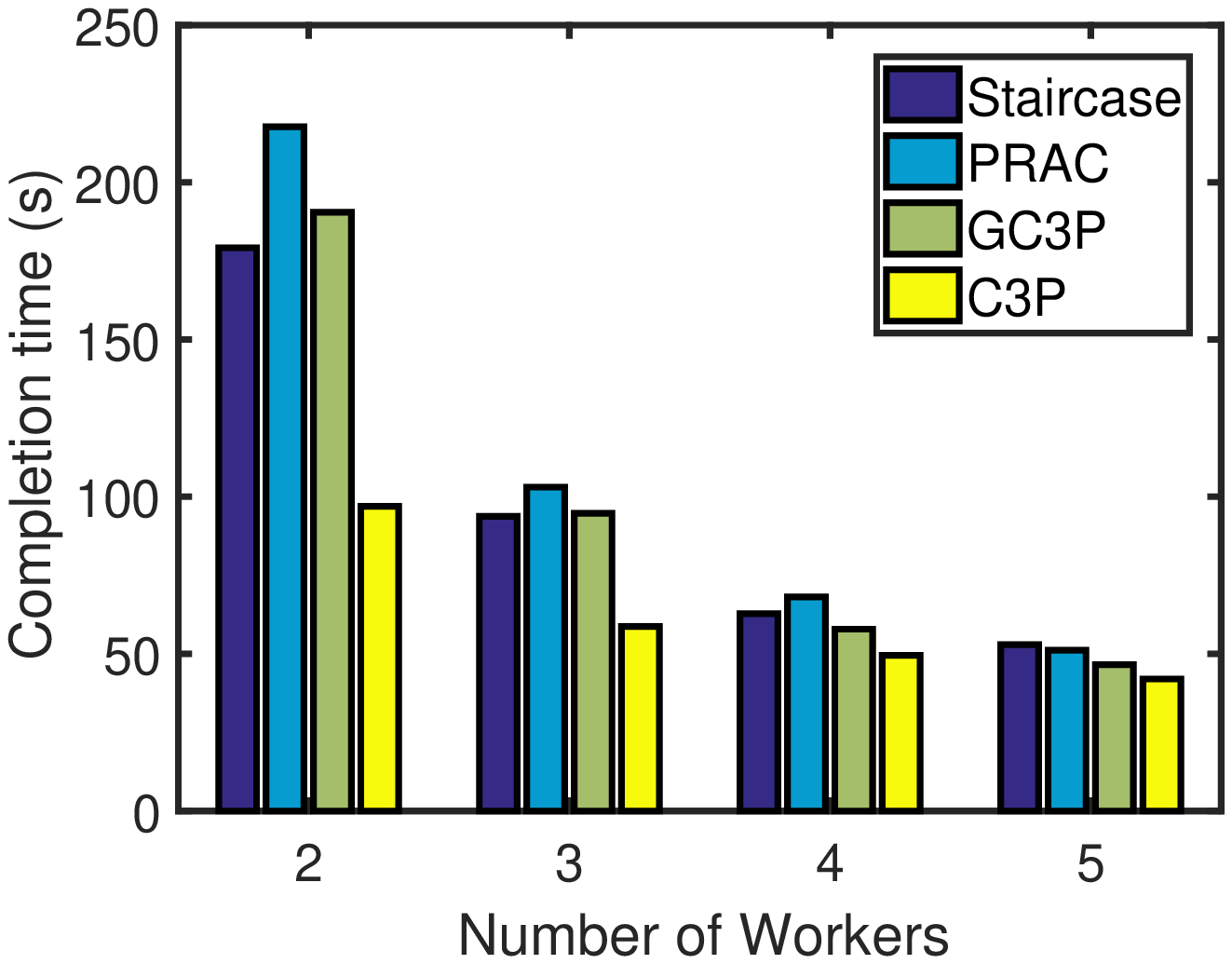}
\caption{Completion time as function of the number of workers in homogeneous setup.}
\label{exp_homogeneous}
\end{figure}

\begin{figure}
	\centering
	\begin{subfigure}{0.4\textwidth}
		\includegraphics[width=\textwidth]{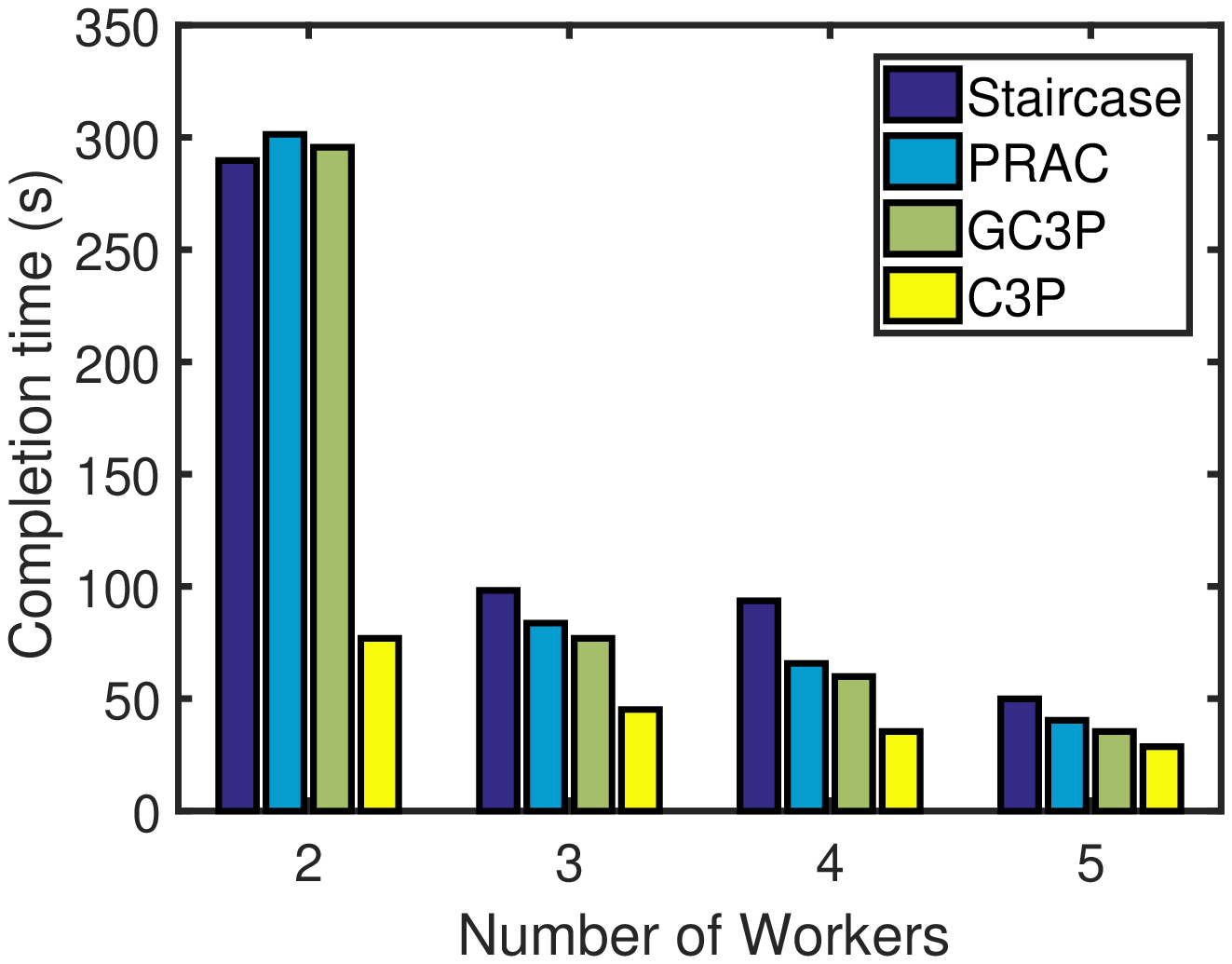}
		\caption{We assume a fast worker is adversarial for GC3P.}
        \label{exp_heterogeneous1}
	\end{subfigure}
	\vspace{1em}
	\begin{subfigure}{0.4\textwidth}
		\includegraphics[width=\textwidth]{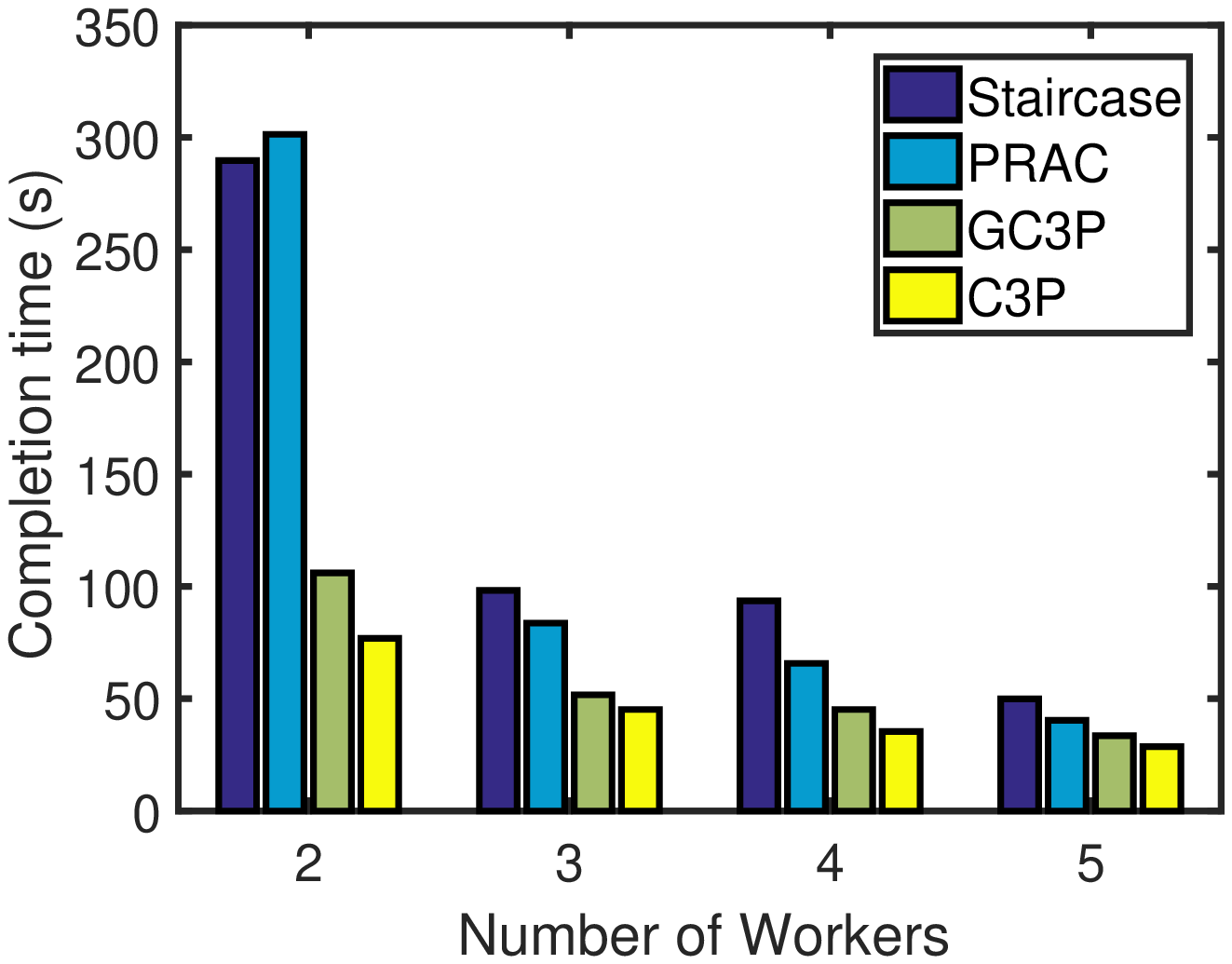}
		\caption{We assume a slow worker is adversarial for GC3P.}
        \label{exp_heterogeneous2}
	\end{subfigure}
	\caption{Completion time as function of the number of workers in heterogeneous setup.}
    \label{exp_heterogeneous_1}
\end{figure}

Now, we focus on heterogeneous setup. We group the workers into two groups; fast workers (per task delay follows exponential delay with mean $2$ seconds) and slow workers (per task delay follows exponential distribution with mean $5$ seconds). Figure \ref{exp_heterogeneous_1} presents the completion time as a function of number of workers. In this setup, for the $n$-worker scenario, there are $\left \lceil{\frac{n}{2}}\right \rceil$ fast and $\left \lfloor{\frac{n}{2}}\right \rfloor$ slow workers. 
%
The difference between the setups of Figure~\ref{exp_heterogeneous_1}(a) and Figure~\ref{exp_heterogeneous_1}(b) is that we remove a fast worker (as adversarial) for GC3P in the former, whereas in the latter, we assume that the eavesdropper is a slow worker. As illustrated in Figure~\ref{exp_heterogeneous_1}, for the 2-worker case,
%
%
due to the $5\%$  overhead introduced by Fountain codes, PRAC performs worse than Staircase code. However, PRAC outperforms Staircase codes in terms of completion time for 3, 4, and 5 worker cases. This is due to the fact that PRAC can utilize results calculated by slow workers more effectively when the number of workers is large. On the other hand, the results computed by slow workers are often discarded in Staircase codes, which is a waste of computation resources. If a fast worker is removed as adversarial for GC3P, the difference between the performance of GC3P and PRAC becomes smaller. This result is intuitive as, in PRAC, the master has to wait for the $(z+1)^\text{st}$ fastest worker to decode $\mba \mbx$, which is also the case for GC3P in this setting. 

\begin{figure}
	\centering
	\begin{subfigure}{0.4\textwidth} 
		\includegraphics[width=\textwidth]{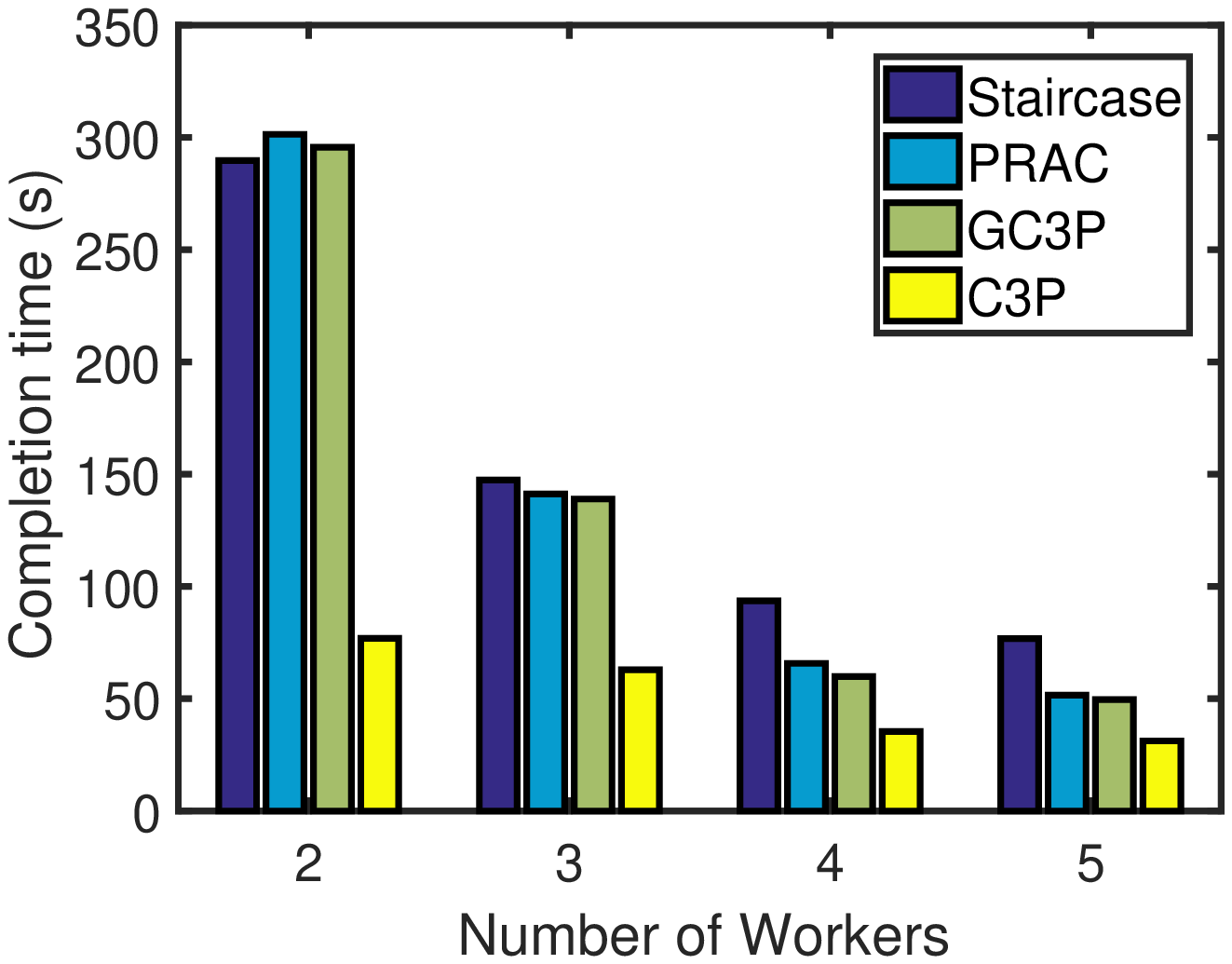}
		\caption{We assume a fast worker is adversarial for GC3P.} 
        \label{exp_heterogeneous3}
	\end{subfigure}
	\vspace{1em} 
	\begin{subfigure}{0.4\textwidth} 
		\includegraphics[width=\textwidth]{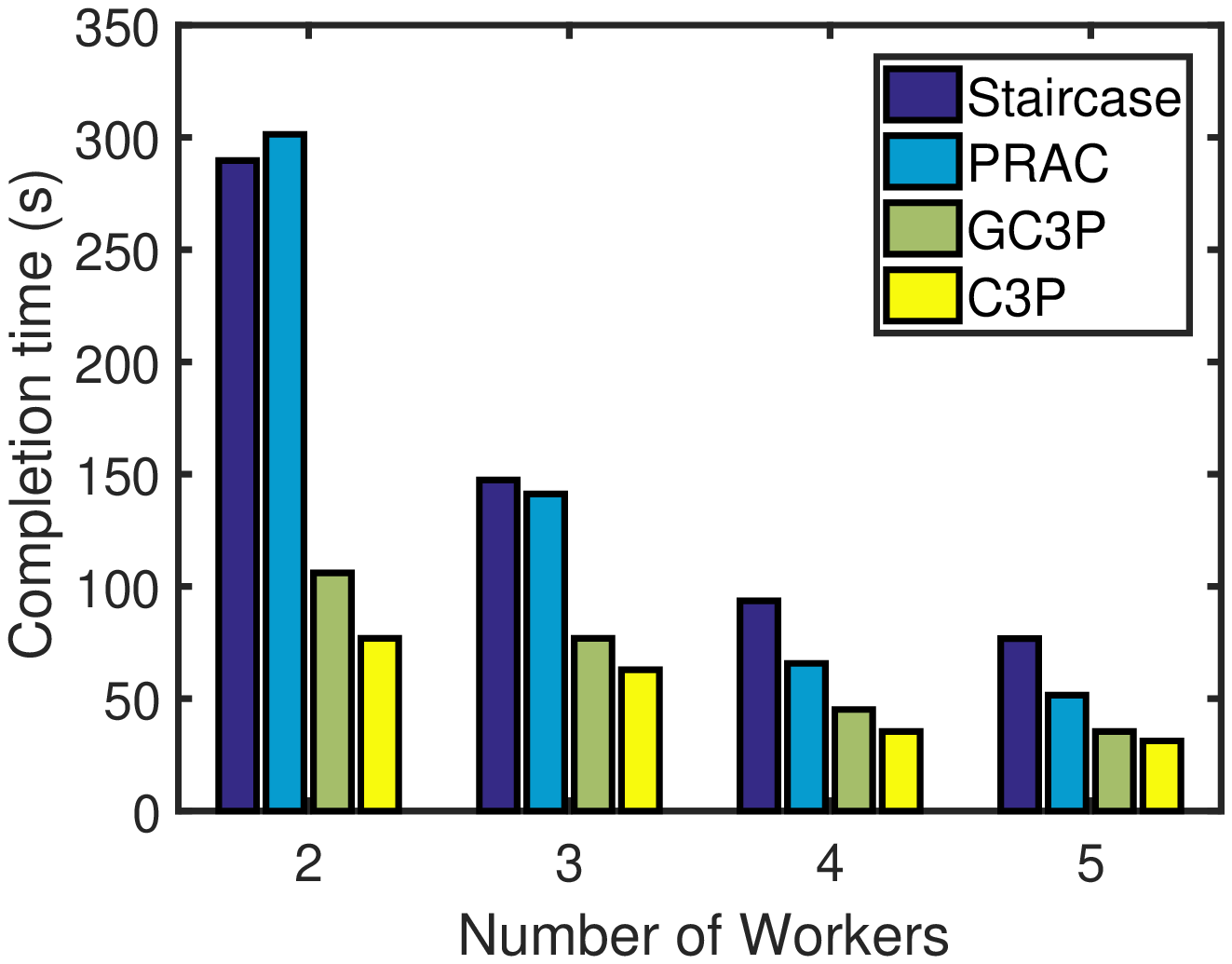}
		\caption{We assume a slow worker is adversarial for GC3P.} 
        \label{exp_heterogeneous4}
	\end{subfigure}
	\caption{Completion time as function of the number of workers in heterogeneous setup.}
    \label{exp_heterogeneous_2}
\end{figure}

In Figure~\ref{exp_heterogeneous_2}, we consider the same setup with the exception that 
for the $n$-worker scenario, there are $\left \lceil{\frac{n}{2}}\right \rceil$ slow and $\left \lfloor{\frac{n}{2}}\right \rfloor$ fast workers.
%
%
Staircase codes perform more closely to PRAC in the 3-worker case as compared to Figure~\ref{exp_heterogeneous_1} since the setup of Fig.6 assumes that the n-z=2 slowest workers are homogeneous, whereas in Fig.5 the n-z=2 slowest workers are heterogeneous. Yet, for 5-worker case, PRAC outperforms Staircase codes when comparing to Figure~\ref{exp_heterogeneous_1} since PRAC is adaptive to time-varying resources while Staircase codes assigns tasks a priori in a static manner. 

Note that in all experiments when $n-z$ slowest workers are homogeneous Staircase codes outperform GC3P and PRAC. This happens because pre-allocating the tasks to the workers avoids the overhead of sub-tasks required by Fountain codes and utilizes all the workers to their fullest capacity.

\section{Related work} \label{sec:related}
Mobile cloud computing is a rapidly growing field with the aim of providing better experience of quality and extensive computing resources to mobile devices \cite{dinh2013survey, FERNANDO201384}. The main solution to mobile computing is to offload tasks to the cloud or to neighboring devices by exploiting connectivity of the devices. With task offloading come several challenges such as heterogeneity of the devices, time varying communication channels and energy efficiency, see \eg \cite{6517049, 7437124, 6903299, 7037232}. We refer interested reader to \cite{KS18} and references within for a detailed literature on edge computing and mobile cloud computing. 

The problem of stragglers in distributed systems is initially studied by the distributed computing community, see \eg 
\cite{dean2008mapreduce,DB13,recht2011hogwild,dean2012large}. Research interest in using coding theoretical techniques for straggler mitigation in distributed content download and distributed computing is rapidly growing. The early body of work focused on content download, see \eg \cite{HPZR12,JLS12,TOFEC,KSS15, PS18}. Using codes for straggler mitigation in distributed computing started in~\cite{lee2018speeding} where the authors proposed the use of MDS codes for distributed linear machine learning algorithms in homogeneous workers setting. 

Following the work of \cite{lee2018speeding}, coding schemes for straggler mitigation in distributed matrix-matrix multiplication, coded computing and machine learning algorithms are introduced and the fundamental limits between the computation load and the communication cost are studied, see \eg \cite{yu2017polynomial,QMA18} and references within for matrix-matrix multiplication, see \cite{karakus2017straggler, maity2018robust, chen2018draco, kiani2018exploitation, ozfaturay2018speeding, ye2018communication, ferdinand2018anytime, li2018near, lee2018speeding, tandon2017gradient,raviv2017gradient,halbawi2017improving,li2016unified,li2016fundamental} for machine learning algorithms and \cite{DCG16, yang2017computing,dutta2017coded, USMHYJTP18} and references within for other topics. 

Codes for privacy and straggler mitigation in distributed computing are first introduced in \cite{BPR17} where the authors consider a homogeneous setting and focus on matrix-vector multiplication. Beyond matrix-vector multiplication, the problem of private distributed matrix-matrix multiplication and private polynomial computation with straggler tolerance is studied \cite{d2018gasp,yu2018lagrange,chang2018capacity,kakar2018rate,yang2018secure,YRSA18}. The former works are designed for the homogeneous static setting in which the master has a prior knowledge on the computation capacities of the workers and pre-assigns the sub-tasks equally to them. In addition, the master sets a threshold on the number of stragglers that it can tolerate throughout the whole process. In contrast, PRAC is designed for the heterogeneous dynamic setting in which workers have different computation capacities that can change over time. PRAC assigns the sub-tasks to the workers in an adaptive manner based on the estimated computation capacity of each worker. Furthermore, PRAC can tolerate a varying number of stragglers as it uses an underlying rateless code, which gives the master a higher flexibility in adaptively assigning the sub-tasks to the workers. Those properties of PRAC allow a better use of the workers over the whole process. On the other hand, PRAC is restricted to matrix-vector multiplication. Although coded computation is designed for linear operations, there is a recent effort to apply coded computation for non-linear operations. For example, \cite{so2019codedprivateml} applied coded computation to logistic regression, and the framework of Gradient coding started in \cite{tandon2017gradient} generalizes to any gradient-descent algorithm. Our work is complementary with these works. For example, our work can be directly used as complementary to \cite{so2019codedprivateml} to provide privacy and adaptive task offloading to logistic regression.

Secure multi-party communication (SMPC) \cite{SMPCbook} can be related to our work as follows. The setting of secure multi-party computing schemes assumes the presence of several parties (masters in our terminology) who want to compute a function of all the data owned by the different parties without revealing any information about the individual data of each party. This setting is a generalized version of the master/worker setting that we consider. More precisely, an SMPC scheme reduces to our Master/worker setting if we assume that only one party owns data and the others have no data to include in the function to be computed. SMPC schemes use threshold secret sharing schemes, therefore they restrict the master to a fixed number of stragglers. Thus, showing that PRAC outperforms Staircase codes (which are the best known family of threshold secret sharing schemes) implies that PRAC outperform the use of SMPC schemes that are reduced to this setting. Works on privacy-preserving machine learning algorithms are also related to our work. However, the privacy constraint in this line of work is computational privacy and the proposed solutions do not take stragglers into account, see \eg \cite{secureml1,secureml2,gade2017private}. 

We restrict the scope of this paper to eavesdropping attacks, which are important on their own merit. Privacy and security can be achieved by using Maximum Distance Separable (MDS)-like codes which restrict the master to a fixed maximum number of stragglers \cite{yang2018secure,yu2018lagrange}. Our solution on the other hand addresses the privacy problem in an adaptive coded computation setup without such a restriction. In this setup, security cannot be addressed by expanding the results of \cite{yang2018secure,yu2018lagrange}. In fact, we developed a secure adaptive coded computation mechanism in our recent paper \cite{keshtkarjahromi2019secure} against Byzantine attacks. The private and secure adaptive coded computation obtained by combining this paper and \cite{keshtkarjahromi2019secure} is out of scope of this paper.

\section{Conclusion}\label{sec:conc}

The focus of this paper is to develop a secure edge computing mechanism to mitigate the computational bottleneck of IoT devices by allowing these devices to help each other in their computations, with possible help from the cloud if available. Our key tool is the theory of coded computation, which advocates mixing data in computationally intensive tasks by employing erasure codes and offloading these tasks to other devices for computation. Focusing on eavesdropping attacks, we designed a private and rateless adaptive coded computation (PRAC) mechanism considering (i) the privacy requirements of IoT applications and devices, and (ii) the heterogeneous and time-varying resources of edge devices. Our proposed PRAC model can provide adequate security and latency guarantees to support real-time computation at the edge. We showed through analysis, MATLAB simulations, and experiments on Android-based smartphones that PRAC outperforms known secure coded computing methods when resources are heterogeneous.

\section{Acknowledgement}
This work was supported in parts by the Army Research Lab (ARL) under Grant W911NF-1820181, National Science Foundation (NSF) under Grants CNS-1801708 and CNS-1801630, and the National Institute of Standards and Technology (NIST) under Grant 70NANB17H188.

\bibliographystyle{ieeetr}
\bibliography{SCCC}

\begin{thebibliography}{10}

\bibitem{BXKDRS19}
R.~Bitar, Y.~Xing, Y.~Keshtkarjahromi, V.~Dasari, S.~El~Rouayheb, and
  H.~Seferoglu, ``{PRAC}: {P}rivate and {R}ateless {A}daptive {C}oded
  {C}omputation at the {E}dge,'' in {\em SPIE - Defense + Commercial Sensing},
  vol.~11013, 2019.

\bibitem{KS18}
Y.~Keshtkarjahromi, Y.~Xing, and H.~Seferoglu, ``Dynamic heterogeneity-aware
  coded cooperative computation at the edge,'' in {\em 2018 IEEE 26th
  International Conference on Network Protocols (ICNP)}, Sept 2018.

\bibitem{BPR17}
R.~Bitar, P.~Parag, and S.~El~Rouayheb, ``Minimizing latency for secure
  distributed computing,'' in {\em Information Theory (ISIT), 2017 IEEE
  International Symposium on}, pp.~2900--2904, IEEE, 2017.

\bibitem{li2016unified}
S.~Li, M.~A. Maddah-Ali, and A.~S. Avestimehr, ``A unified coding framework for
  distributed computing with straggling servers,'' in {\em Globecom Workshops
  (GC Wkshps), 2016 IEEE}, pp.~1--6, IEEE, 2016.

\bibitem{dutta2017coded}
S.~Dutta, V.~Cadambe, and P.~Grover, ``Coded convolution for parallel and
  distributed computing within a deadline,'' {\em arXiv preprint
  arXiv:1705.03875}, 2017.

\bibitem{yang2017computing}
Y.~Yang, P.~Grover, and S.~Kar, ``Computing linear transformations with
  unreliable components,'' {\em IEEE Transactions on Information Theory}, 2017.

\bibitem{halbawi2017improving}
W.~Halbawi, N.~Azizan-Ruhi, F.~Salehi, and B.~Hassibi, ``Improving distributed
  gradient descent using reed-solomon codes,'' {\em arXiv preprint
  arXiv:1706.05436}, 2017.

\bibitem{yu2017polynomial}
Q.~Yu, M.~Maddah-Ali, and S.~Avestimehr, ``Polynomial codes: an optimal design
  for high-dimensional coded matrix multiplication,'' in {\em Advances in
  Neural Information Processing Systems}, pp.~4403--4413, 2017.

\bibitem{DCG16}
S.~Dutta, V.~Cadambe, and P.~Grover, ``Short-dot: Computing large linear
  transforms distributedly using coded short dot products,'' in {\em $29$th
  Annual Conference on Neural Information Processing Systems (NIPS)},
  pp.~2092--2100, 2016.

\bibitem{tandon2017gradient}
R.~Tandon, Q.~Lei, A.~G. Dimakis, and N.~Karampatziakis, ``Gradient coding:
  Avoiding stragglers in distributed learning,'' in {\em International
  Conference on Machine Learning}, pp.~3368--3376, 2017.

\bibitem{li2016fundamental}
S.~Li, M.~A. Maddah-Ali, and A.~S. Avestimehr, ``Fundamental tradeoff between
  computation and communication in distributed computing,'' in {\em IEEE
  International Symposium on Information Theory (ISIT)}, 2016.

\bibitem{lee2018speeding}
K.~Lee, M.~Lam, R.~Pedarsani, D.~Papailiopoulos, and K.~Ramchandran, ``Speeding
  up distributed machine learning using codes,'' {\em IEEE Transactions on
  Information Theory}, vol.~64, no.~3, pp.~1514--1529, 2018.

\bibitem{karakus2017straggler}
C.~Karakus, Y.~Sun, S.~Diggavi, and W.~Yin, ``Straggler mitigation in
  distributed optimization through data encoding,'' in {\em Advances in Neural
  Information Processing Systems}, pp.~5434--5442, 2017.

\bibitem{aktas2017effective}
M.~F. Aktas, P.~Peng, and E.~Soljanin, ``Effective straggler mitigation: Which
  clones should attack and when?,'' {\em ACM SIGMETRICS Performance Evaluation
  Review}, vol.~45, no.~2, pp.~12--14, 2017.

\bibitem{shirazi2017extended}
S.~N. Shirazi, A.~Gouglidis, A.~Farshad, and D.~Hutchison, ``The extended
  cloud: Review and analysis of mobile edge computing and fog from a security
  and resilience perspective,'' {\em IEEE Journal on Selected Areas in
  Communications}, vol.~35, no.~11, pp.~2586--2595, 2017.

\bibitem{abbas2017mobile}
N.~Abbas, Y.~Zhang, A.~Taherkordi, and T.~Skeie, ``Mobile edge computing: A
  survey,'' {\em IEEE Internet of Things Journal}, vol.~5, no.~1, pp.~450--465,
  2017.

\bibitem{roman2018mobile}
R.~Roman, J.~Lopez, and M.~Mambo, ``Mobile edge computing, fog et al.: A survey
  and analysis of security threats and challenges,'' {\em Future Generation
  Computer Systems}, vol.~78, pp.~680--698, 2018.

\bibitem{LT}
M.~Luby, ``Lt codes,'' in {\em null}, p.~271, IEEE, 2002.

\bibitem{Raptor}
A.~Shokrollahi, ``Raptor codes,'' {\em IEEE/ACM Transactions on Networking
  (TON)}, vol.~14, no.~SI, pp.~2551--2567, 2006.

\bibitem{Fountain}
D.~J. MacKay, ``Fountain codes,'' {\em IEE Proceedings-Communications},
  vol.~152, no.~6, pp.~1062--1068, 2005.

\bibitem{lin1983error}
S.~Lin and D.~Costello, {\em Error-Correcting Codes}.
\newblock Prentice-Hall, Inc, 1983.

\bibitem{macwilliams1977theory}
F.~J. MacWilliams and N.~J.~A. Sloane, {\em The theory of error-correcting
  codes}.
\newblock Elsevier, 1977.

\bibitem{seber2012linear}
G.~A. Seber and A.~J. Lee, {\em Linear regression analysis}, vol.~329.
\newblock John Wiley \& Sons, 2012.

\bibitem{suykens1999least}
J.~A. Suykens and J.~Vandewalle, ``Least squares support vector machine
  classifiers,'' {\em Neural processing letters}, vol.~9, no.~3, pp.~293--300,
  1999.

\bibitem{lacan2009reed}
J.~Lacan, V.~Roca, J.~Peltotalo, and S.~Peltotalo, ``Reed-solomon forward error
  correction (fec) schemes,'' tech. rep., 2009.

\bibitem{dinh2013survey}
H.~T. Dinh, C.~Lee, D.~Niyato, and P.~Wang, ``A survey of mobile cloud
  computing: architecture, applications, and approaches,'' {\em Wireless
  communications and mobile computing}, vol.~13, no.~18, pp.~1587--1611, 2013.

\bibitem{FERNANDO201384}
N.~Fernando, S.~W. Loke, and W.~Rahayu, ``Mobile cloud computing: A survey,''
  {\em Future Generation Computer Systems}, vol.~29, no.~1, pp.~84 -- 106,
  2013.

\bibitem{6517049}
Z.~Sanaei, S.~Abolfazli, A.~Gani, and R.~Buyya, ``Heterogeneity in mobile cloud
  computing: Taxonomy and open challenges,'' {\em IEEE Communications Surveys
  Tutorials}, vol.~16, no.~1, pp.~369--392, 2014.

\bibitem{7437124}
Y.~Geng, W.~Hu, Y.~Yang, W.~Gao, and G.~Cao, ``Energy-efficient computation
  offloading in cellular networks,'' in {\em IEEE International Conference on
  Network Protocols (ICNP)}, 2015.

\bibitem{6903299}
R.~K. Lomotey and R.~Deters, ``Architectural designs from mobile cloud
  computing to ubiquitous cloud computing - survey,'' in {\em IEEE World
  Congress on Services}, 2014.

\bibitem{7037232}
T.~Penner, A.~Johnson, B.~V. Slyke, M.~Guirguis, and Q.~Gu, ``Transient clouds:
  Assignment and collaborative execution of tasks on mobile devices,'' in {\em
  IEEE Global Communications Conference}, 2014.

\bibitem{dean2008mapreduce}
J.~Dean and S.~Ghemawat, ``Mapreduce: simplified data processing on large
  clusters,'' {\em Communications of the ACM}, vol.~51, no.~1, pp.~107--113,
  2008.

\bibitem{DB13}
J.~Dean and L.~A. Barroso, ``The tail at scale,'' {\em Communications of the
  ACM}, vol.~56, no.~2, pp.~74--80, 2013.

\bibitem{recht2011hogwild}
B.~Recht, C.~Re, S.~Wright, and F.~Niu, ``Hogwild: A lock-free approach to
  parallelizing stochastic gradient descent,'' in {\em Advances in Neural
  Information Processing Systems}, pp.~693--701, 2011.

\bibitem{dean2012large}
J.~Dean, G.~Corrado, R.~Monga, K.~Chen, M.~Devin, M.~Mao, A.~Senior, P.~Tucker,
  K.~Yang, Q.~V. Le, {\em et~al.}, ``Large scale distributed deep networks,''
  in {\em Advances in neural information processing systems}, pp.~1223--1231,
  2012.

\bibitem{HPZR12}
L.~Huang, S.~Pawar, H.~Zhang, and K.~Ramchandran, ``Codes can reduce queueing
  delay in data centers,'' in {\em IEEE International Symposium on Information
  Theory (ISIT)}, 2012.

\bibitem{JLS12}
G.~Joshi, Y.~Liu, and E.~Soljanin, ``Coding for fast content download,'' in
  {\em 50th Annual Allerton Conference on Communication, Control, and
  Computing}, 2012.

\bibitem{TOFEC}
G.~Liang and U.~C. Kozat, ``{TOFEC}: Achieving optimal throughput-delay
  trade-off of cloud storage using erasure codes,'' in {\em IEEE International
  Conference on Computer Communications}, 2014.

\bibitem{KSS15}
S.~Kadhe, E.~Soljanin, and A.~Sprintson, ``Analyzing the download time of
  availability codes,'' in {\em IEEE International Symposium on Information
  Theory (ISIT)}, 2015.

\bibitem{PS18}
P.~Peng and E.~Soljanin, ``On distributed storage allocations of large files
  for maximum service rate,'' {\em arXiv preprint, arXiv:1808.07545}, 2018.

\bibitem{QMA18}
Q.~Yu, M.~A. Maddah{-}Ali, and A.~S. Avestimehr, ``Straggler mitigation in
  distributed matrix multiplication: Fundamental limits and optimal coding,''
  {\em arXiv preprint, arXiv:1801.07487}, 2018.

\bibitem{maity2018robust}
R.~K. Maity, A.~S. Rawat, and A.~Mazumdar, ``Robust gradient descent via moment
  encoding with ldpc codes,'' {\em arXiv preprint arXiv:1805.08327}, 2018.

\bibitem{chen2018draco}
L.~Chen, Z.~Charles, D.~Papailiopoulos, {\em et~al.}, ``Draco: Robust
  distributed training via redundant gradients,'' {\em arXiv preprint
  arXiv:1803.09877}, 2018.

\bibitem{kiani2018exploitation}
S.~Kiani, N.~Ferdinand, and S.~C. Draper, ``Exploitation of stragglers in coded
  computation,'' in {\em IEEE International Symposium on Information Theory
  (ISIT)}, pp.~1988--1992, IEEE, 2018.

\bibitem{ozfaturay2018speeding}
E.~Ozfaturay, D.~Gunduz, and S.~Ulukus, ``Speeding up distributed gradient
  descent by utilizing non-persistent stragglers,'' {\em arXiv preprint
  arXiv:1808.02240}, 2018.

\bibitem{ye2018communication}
M.~Ye and E.~Abbe, ``Communication-computation efficient gradient coding,''
  {\em arXiv preprint arXiv:1802.03475}, 2018.

\bibitem{ferdinand2018anytime}
N.~Ferdinand and S.~Draper, ``Anytime stochastic gradient descent: A time to
  hear from all the workers,'' {\em arXiv preprint arXiv:1810.02976}, 2018.

\bibitem{li2018near}
S.~Li, S.~M.~M. Kalan, A.~S. Avestimehr, and M.~Soltanolkotabi, ``Near-optimal
  straggler mitigation for distributed gradient methods,'' in {\em 2018 IEEE
  International Parallel and Distributed Processing Symposium Workshops
  (IPDPSW)}, pp.~857--866, IEEE, 2018.

\bibitem{raviv2017gradient}
N.~Raviv, I.~Tamo, R.~Tandon, and A.~G. Dimakis, ``Gradient coding from cyclic
  mds codes and expander graphs,'' {\em arXiv preprint arXiv:1707.03858}, 2017.

\bibitem{USMHYJTP18}
U.~Sheth, S.~Dutta, M.~Chaudhari, H.~Jeong, Y.~Yang, J.~Kohonen, T.~Roos, and
  P.~Grover, ``An application of storage-optimal matdot codes for coded matrix
  multiplication: Fast k-nearest neighbors estimation,'' {\em arXiv preprint,
  arXiv:1811.11811}, 2018.

\bibitem{d2018gasp}
R.~G. D'Oliveira, S.~E. Rouayheb, and D.~Karpuk, ``Gasp codes for secure
  distributed matrix multiplication,'' {\em arXiv preprint arXiv:1812.09962},
  2018.

\bibitem{yu2018lagrange}
Q.~Yu, S.~Li, N.~Raviv, S.~M.~M. Kalan, M.~Soltanolkotabi, and S.~Avestimehr,
  ``Lagrange coded computing: Optimal design for resiliency, security and
  privacy,'' {\em arXiv preprint arXiv:1806.00939v4}, 2019.

\bibitem{chang2018capacity}
W.-T. Chang and R.~Tandon, ``On the capacity of secure distributed matrix
  multiplication,'' in {\em 2018 IEEE Global Communications Conference
  (GLOBECOM)}, pp.~1--6, IEEE, 2018.

\bibitem{kakar2018rate}
J.~Kakar, S.~Ebadifar, and A.~Sezgin, ``Rate-efficiency and
  straggler-robustness through partition in distributed two-sided secure matrix
  computation,'' {\em arXiv preprint arXiv:1810.13006}, 2018.

\bibitem{yang2018secure}
H.~Yang and J.~Lee, ``Secure distributed computing with straggling servers
  using polynomial codes,'' {\em IEEE Transactions on Information Forensics and
  Security}, vol.~14, no.~1, pp.~141--150, 2018.

\bibitem{YRSA18}
Q.~Yu, N.~Raviv, J.~So, and A.~S. Avestimehr, ``Lagrange coded computing:
  Optimal design for resiliency, security and privacy,'' {\em arXiv preprint,
  arXiv:1806.00939}, 2018.

\bibitem{so2019codedprivateml}
J.~So, B.~Guler, A.~S. Avestimehr, and P.~Mohassel, ``Coded{P}rivate{ML}: A
  fast and privacy-preserving framework for distributed machine learning,''
  {\em arXiv preprint arXiv:1902.00641}, 2019.

\bibitem{SMPCbook}
R.~Cramer, I.~B. Damgrd, and J.~B. Nielsen, {\em Secure Multiparty Computation
  and Secret Sharing}.
\newblock New York, NY, USA: Cambridge University Press, 1st~ed., 2015.

\bibitem{secureml1}
H.~Takabi, E.~Hesamifard, and M.~Ghasemi, ``Privacy preserving multi-party
  machine learning with homomorphic encryption,'' in {\em $29$th Annual
  Conference on Neural Information Processing Systems (NIPS)}, 2016.

\bibitem{secureml2}
R.~Hall, S.~E. Fienberg, and Y.~Nardi, ``Secure multiple linear regression
  based on homomorphic encryption,'' {\em Journal of Official Statistics},
  vol.~27, no.~4, p.~669, 2011.

\bibitem{gade2017private}
S.~Gade and N.~H. Vaidya, ``Private learning on networks: Part ii,'' {\em arXiv
  preprint arXiv:1703.09185}, 2017.

\bibitem{keshtkarjahromi2019secure}
Y.~Keshtkarjahromi, R.~Bitar, V.~Dasari, S.~E. Rouayheb, and H.~Seferoglu,
  ``Secure coded cooperative computation at the heterogeneous edge against
  byzantine attacks,'' in {\em IEEE Global Communication Conference
  (GLOBECOM)}, 2019.

\bibitem{RSKGlobecom11}
N.~B. Shah, K.~V. Rashmi, and P.~V. Kumar, ``Information-theoretically secure
  regenerating codes for distributed storage,'' in {\em Proc. IEEE Global
  Communications Conference}, 2011.

\bibitem{RSS12}
S.~E. Rouayheb, E.~Soljanin, and A.~Sprintson, ``Secure network coding for
  wiretap networks of type {II},'' {\em IEEE Transactions on Information
  Theory}, vol.~58, pp.~1361--1371, March 2012.

\bibitem{BRIT17}
R.~Bitar and S.~El~Rouayheb, ``Staircase codes for secret sharing with optimal
  communication and read overheads,'' {\em IEEE Transactions on Information
  Theory}, vol.~PP, no.~99, pp.~1--1, 2017.

\bibitem{KS18LongerVersion}
Y.~Keshtkarjahromi, Y.~Xing, and H.~Seferoglu, ``Dynamic heterogeneity-aware
  coded cooperative computation at the edge,'' {\em arXiv preprint,
  rXiv:1801.04357v3}, 2018.

\end{thebibliography}

\appendices
\section{Hiding the Vector $\mathbf{x}$}
\label{appen:hiding}
In machine learning applications, the master runs iterative algorithms in which the vector $\mathbf{x}$ contains information about $A$ and needs to be hidden from the workers. We describe how PRAC can be generalized to achieve privacy for both $A$ and $\mathbf{x}$. The idea is to divide the $n$ workers into two disjoint groups and ask each of them to privately multiply $A$ by a vector that is statistically independent of $\mathbf{x}$. In addition, the master should be able to decode $A\mathbf{x}$ from the results of both multiplications. The scheme works as follows. The master divides the workers into two groups of cardinality $n_1$ and $n_2$ such that $n_1+n_2=n$ and chooses the security parameters $z_1<n_1$ and $z_2<n_2$. To hide $\mathbf{x}$, the master generates a random vector $\mathbf{u}$ of same size as $\mathbf{x}$ and sends $\mathbf{x}+\mathbf{u}$ to the first group and $\mathbf{u}$ to the second group. Afterwards, the master applies PRAC on both groups. According to our scheme, the master decodes $A(\mathbf{x}+\mathbf{u})$ and $A\mathbf{u}$ after receiving enough responses from the workers of each group. Hence, the master can decode $A\mathbf{x}$. Note that no information about $\mathbf{x}$ is revealed because it is one-time padded by $\mathbf{u}$. Note that here we assume workers from group~1 do not collude with workers from group~2. The same idea can be generalized to the case where workers from different groups can collude by creating more groups and encoding $\mbx$ using an appropriate secret sharing scheme. For instance, if the master divides the workers into $3$ groups and workers from any $2$ different groups can collude, the master encodes $\mbx$ into $\mathbf{u}_1$, $\mathbf{u}_2$ and $\mathbf{u}_1+\mathbf{u}_2+\mbx$ and sends each vector to a different group. 

\section{\label{appen:proofPrivacy} Extension of proof of privacy (\ie Theorem~\ref{thm:pracPrivacy}) } 
Since at each round we generate new random matrices, it is enough to study the privacy condition at one round. Consider a given round $t$ of PRAC. Let $P_i$ denote the random variable representing packet $p_{i}$ sent to worker $w_i$. For any subset $Z\subset \{1,\dots,n\}, |Z|=z$, denote by $P_Z$ the collection of packets indexed by $Z$, \ie $P_Z=\{p_i; i \in Z\}$. We prove that the perfect secrecy constraint $H(A \mid P_Z)=H(A)$, given in~\eqref{eq:priv}, is equivalent to $H(K \mid P_Z,A)=0$. The proof is standard \cite{RSKGlobecom11,RSS12,BRIT17} but we reproduce it here for completeness.  In what follows, the logarithms in the entropy function are taken base $q$, where $q$ is a power of prime for which all matrices can be defined in a finite field $\mathbb{F}_q$. We can write,
\begin{align}
H(A \mid P_Z)&=H(A)-H(P_Z)+H(P_Z \mid A)\\
~&=H(A)-H(P_Z)+H(P_Z \mid A)-H(P_Z \mid A, K) \label{eq:sk}\\
~&=H(A)-H(P_Z)+I(P_Z; K \mid A)\\
~&=H(A)-H(P_Z)+H(K \mid A) - H(K \mid P_Z,A)\\
~&=H(A)-H(P_Z)+H(K) - H(K \mid P_Z,A) \label{eq:keys}\\
~&=H(A)-z+z - H(K \mid P_Z,A) \label{eq:sc}\\
~&=H(A) - H(K \mid P_Z,A).
\end{align}
\vspace{0.2cm}

\noindent Equation~\eqref{eq:sk} follows from the fact that given the data $A$ and the keys $R_{1}, \dots, R_z$ all packets generated by the master can be decoded, in particular the packets $P_Z$ received by any $z$ workers can be decoded, 
\ie  $H(P_Z \mid A, K)=0$. Equation~\eqref{eq:keys} follows because the random matrices are chosen independently from the data matrix $A$ and equation~\eqref{eq:sc} follows because PRAC uses $z$ independent random matrices that are chosen uniformly at random from the field $\mathbb{F}_q$. Therefore, proving that $H(A|P_Z) = H(A)$ is equivalent to proving that $H(K \mid P_Z,A) = 0$. In other words, we need to prove that the random matrices can be decoded given the collection of packets sent to any $z$ workers and the data matrix $A$. This is the main reason behind encoding the random matrices using an $(n,z)$ MDS code. We formally prove that $H(K \mid P_Z,A) = 0$ in the proof of Theorem~\ref{thm:pracPrivacy}. Note from equation~\eqref{eq:keys} that for any code to be information theoretically private, $H(K)$ cannot be less then $H(P_Z)=z$. This means that a secure code must use at least $z$ independent random matrices.

\section{\label{appen:proofLm} Proof of Lemma~\ref{lemma:z+1}}

We prove the lemma by contradiction. Assume that there exists a private coded computing scheme for distributed linear computation that is secure against $z$ colluding workers and allows the master to decode $A\mathbf{x}$ using the help of the fastest $z$ workers. Without loss of generality, assume that the workers are ordered from the fastest to the slowest, \ie worker $w_1$ is the fastest and worker $w_n$ is the slowest. The previous assumption implies that the results sent from the first $z$ workers contain information about $\mba \mbx$, otherwise the master would have to wait at least for the $(z+1)^\text{st}$ fastest worker to decode $\mba \mbx$.  
By linearity of the multiplication $A\mathbf{x}$, decoding information about $A\mathbf{x}$ from the results of $z$ workers implies decoding information about $A$ from the packets sent to those $z$ workers. Hence, there exists a set of $z$ workers for which $H(S|{P}_Z) \neq 0$, where ${P}_Z$ denotes the tasks allocated to a subset $Z\subset \{1,\dots, n\}$ of $z$ workers, hence violating the privacy constraint. Therefore, any private coded computing scheme for linear computation limits the master to the speed of the $(z+1)^\text{st}$ fastest worker in order to decode the wanted result.

\section{\label{appen:proofTh3} Proof of Theorem~\ref{thm:pracCompTime}}


The total delay for receiving $\tau_i$ computed packets from worker $w_i$ is equal to $$T_i \approx RTT_i+\tau_i \E[\beta_{t,i}] \approx \tau_i \E[\beta_{t,i}]$$ where $RTT_i$ is the average transmission delay for sending one packet to worker $w_i$ and receiving one computed packet from the worker, $\beta_{t,i}$ is the computation time spent on multiplying packet $p_{t,i}$ by $\mathbf{x}$ at worker $w_i$, and the average $\E[\beta_{t,i}]$ is taken over all $\tau_i$ packets. The reason is that PRAC is a dynamic algorithm that sends packets to each worker $w_i$ with the interval of $\E[\beta_{t,i}]$ between each two consecutive packets and it utilizes the resources of workers fully \cite{KS18LongerVersion}. 
The reason behind counting only one round-trip time (RTT) in $T_i$ is that in PRAC, the packets are being transmitted to the workers while the previously transmitted packets are being computed at the worker. Therefore, in the overall delay only one $RTT_i$ is required for sending the first packet $p_{1,i}$ to worker $w_i$ and receiving the last computed packet $p_{\tau_i,i}\mathbf{x}$ at the master. To approximate the total delay, we assume that the transmission delay of one packet is negligible compared to the computing delay of all $\tau_i$ packets, which is a valid assumption in practice for IoT-devices at the edge.

On the other hand, in PRAC, the master stops sending packets to workers as soon as it collectively receives $b+\epsilon$ computed packets from the $n-z$ slowest workers (note that $b+\epsilon$ is the number of computed packets required for successful decoding, where $\epsilon$ is the overhead due to  Fountain Coding), \ie $\sum_{i=z+1}^{n}{\tau_i}=b+\epsilon$. Note that the $z$ fastest workers are assigned for computing the keys as described in the previous sections. Due to efficiently using the resources of workers by PRAC, all $n-z$ workers will finish computing $\tau_i$ packets approximately at the same time, \ie $T_{PRAC}\approx T_i \approx \tau_i \E[\beta_{t,i}], i=z+1,...,n$. By replacing $\tau_i$ with $\frac{T_{PRAC}}{\E[\beta_{t,i}]}$ in $\sum_{i=z+1}^{n}{\tau_i}=b+\epsilon$, we can show that $T_{PRAC} \approx \frac{b+\epsilon}{\sum_{i=z+1}^n 1/\E[\beta_{t,i}]}$. Note that the approximated value approaches the exact value by increasing $b$. The reason is that the workers' efficiency increases with increasing $b$
.

\section{\label{appen:proofTh4} Proof of Theorem~\ref{th:gap_PRAC_staircase}}


We express $\mathbb{E}[T_{\text{SC}}]$ as a function of the computing time $\beta_{t, i}$ of worker $w_i$, $i=1,\dots,n$, as
\begin{align}
\E[T_{\text{SC}}] = &\min_{d\in\{k,\dots,n\}}\left\{\dfrac{k-z}{d-z}\E[T_{(d)}]\right\} \\
= &\min_{d\in\{k,\dots,n\}}\left\{\dfrac{b}{d-z}\E[\beta_{t,d}]\right\},
\end{align}
where $w_d$ is the $d^\text{th}$ fastest worker. Next, we find a lower bound on $\E[T_{\text{SC}}]-\E\left[T_{PRAC}\right]$ as follows
\begin{align}
\E[T_{\text{SC}}]&-\E\left[T_{PRAC}\right] = \frac{b}{\frac{d-z}{\E[\beta_{t,d}]}}-\frac{b+\epsilon}{\sum_{i=z+1}^n \frac{1}{\E[\beta_{t,i}]}}\\
=& \frac{b}{\frac{d-z}{\E[\beta_{t,d}]}}-\frac{b+\epsilon}{\sum_{i=z+1}^d \frac{1}{\E[\beta_{t,i}]}+\sum_{i=d+1}^n \frac{1}{\E[\beta_{t,i}]}}\\
\geq & \frac{b}{\frac{d-z}{\E[\beta_{t,d}]}}-\frac{b+\epsilon}{(d-z) \frac{1}{\E[\beta_{t,d}]}+(n-d) \frac{1}{\E[\beta_{t,n}]}}\label{inequality}\\
=& \frac{\frac{b(n-d)}{\E[\beta_{t,n}]}-\frac{\epsilon(d-z)}{\E[\beta_{t,d}]}}{\frac{d-z}{\E[\beta_{t,d}]}(\frac{d-z}{\E[\beta_{t,d}]}+\frac{n-d}{\E[\beta_{t,n}]})}\\
=& \frac{bx-\epsilon y}{y(x+y)},
\end{align}
where $x=\frac{n-d}{\E[\beta_{t,n}]}$ and $y=\frac{d-z}{\E[\beta_{t,d}]}$ and the inequality (\ref{inequality}) comes from the fact that $z \leq k \leq d \leq n$ and the workers are ordered from the fastest to the slowest.

\end{document}